\newtheorem{theorem}{Theorem}
\newtheorem{theorem2}{Theorem}
\newtheorem*{corollary}{Corollary}
\newtheorem*{corollary2}{Corollary}
\def\maketitle{
\@author@finish
\title@column\titleblock@produce
\suppressfloats[t]}
\begin{document}

\title{Certifiable Lower Bounds of Wigner Negativity Volume and \texorpdfstring{\\}{}Non-Gaussian Entanglement with Conditional Displacement Gates}

\author{Lin Htoo Zaw}
\email[Electronic mail: ]{htoo@zaw.li}
\affiliation{Centre for Quantum Technologies, National University of Singapore, 3 Science Drive 2, Singapore 117543}

\begin{abstract}
In circuit and cavity quantum electrodynamics devices where control qubits are dispersively coupled to high-quality-factor cavities, characteristic functions of cavity states can be directly probed with conditional displacement (CD) gates. In this Letter, I propose a method to certify non-Gaussian entanglement between cavities using only CD gates and qubit readouts. The CD witness arises from an application of Bochner's theorem to a surprising connection between two negativities: that of the reduced Wigner function, and that of the partial transpose. Non-Gaussian entanglement of some common states, like entangled cats and photon-subtracted two-mode squeezed vacua, can be detected by measuring as few as four points of the characteristic function. Furthermore, the expectation value of the witness is a simultaneous lower bound to the Wigner negativity volume and a geometric measure of entanglement conjectured to be the partial transpose negativity. Both negativities are strong monotones of non-Gaussianity and entanglement, respectively, so the CD witness provides experimentally accessible lower bounds to quantities related to these monotones without the need for tomography on the cavity states.
\end{abstract}

\phantomsection\addcontentsline{toc}{part}{Certifiable Lower Bounds of Wigner Negativity Volume and Non-Gaussian Entanglement with Conditional Displacement Gates}
\maketitle

\section{Introduction}
Recent advances in both circuit and cavity quantum electrodynamics (cQED) have led to a paradigm where continuous-variable computation is performed in high-quality-factor cavities, in which operations on the quantum states of the cavities are mediated by qubits dispersively coupled to them \cite{cQED5,cQED4,cQED3,cQED2,cQED6,cQED7}. In systems with weak dispersive coupling, measurements with conditional displacement (CD) gates can directly probe the pointwise characteristic function of cavity states \cite{ionCDgate,ECDgate,CNODgate}. Meanwhile, displaced parity gates are possible but would take longer times, while quadrature measurements are unavailable.

Many existing continuous-variable entanglement witnesses are based on quadrature statistics \cite{witness-quad-simon,witness-quad-duan,witness-quad-all,witness-quad-three,witness-quad-multi} or require nonlocal photon-counting measurements \cite{witness-photon-count-1,witness-photon-count-2}. As such, past demonstrations of entanglement in such architectures have instead resorted to violating Bell inequalities with Wigner function measurements \cite{EntBell1,EntangledCats,EntBell2} or by computing the entanglement fidelity from the tomographically reconstructed state \cite{EntFidelity1,EntFidelity2,EntFidelity3,CNODgate}. The former can be difficult in weakly coupled systems due to the necessity of parity gates, while the latter is an expensive operation. Instead, an entanglement witness that uses only a few instances of CD gates would be preferable in such setups.

Meanwhile, in resource-theoretic studies of non-Gauss\-ian\-i\-ty and entanglement, Wigner logarithmic negativity is a common measure that quantifies non-Gaussianity \cite{NonGaussianResourceTheory1,NonGaussianResourceTheory2}, while logarithmic partial transpose negativity is a common measure that quantifies entanglement \cite{LogNegativityMonotone}. Apart from the property that they are non-Gaussian and entanglement monotones respectively, they are also favored in theoretical studies for their computability: the former is an integral over negative regions of the Wigner function, while the latter is the sum of the singular values of a partially transposed matrix. Given a state description, both can be easily calculated with numerical tools.

However, these quantities are difficult to obtain in experimental settings, as full tomography is required to reconstruct the state for either negativities to be calculated. Therefore, it is often desirable to instead find lower bounds to these quantities with just a few experimentally accessible observables \cite{quantify-PPT-1,quantify-PPT-2}.

The contributions of this Letter are twofold. First, I prove that a Wigner negativity witness based on Bochner's theorem provides a lower bound to the Wigner negativity volume. Previous Wigner negativity witnesses have only been shown to lower bound the trace distance negativity, which has not been proven to be a non-Gaussian monotone \cite{CharacteristicFunctionWitness,WitnessingWignerNegativity}.

Second, this Letter introduces a method to directly detect non-Gaussian entanglement between cavities using only CD gates and qubit readouts. In most cases, as few as four settings of the CD gates are needed to certify non-Gaussian entanglement.

These contributions convene in the expected value of the proposed non-Gaussian entanglement witness, which simultaneously lower bounds the Wigner negativity volume and a geometric measure of entanglement. The latter is conjectured, and proven for a large family of states, to be equivalent to the partial transpose negativity.

\section{\label{sec:weakRegime}Weak-dispersive Regime and Conditional Displacement Gates}
In circuit QED (cavity QED) systems where qubits are dispersively coupled to driven cavity modes, readout lines (control lasers) are coupled only to the qubits \cite{cQED4,cQEDreview} (\cite{cQED3,cQED6,cQED7}), so only qubit measurements are available in both cases. Hence, entangling gates between qubits and cavities are required to infer some properties of the cavity states from the qubit measurements. Such entangling operations typically have gate times of order $\sim \chi^{-1}$, where $\chi$ is the cross-Kerr nonlinearity that dictates the coupling strength between the qubit and the cavity \cite{displacedParity}.

On the other hand, $\chi$ has to be small to suppress higher-order effects like the anharmonicity of the cavity \cite{cQED4,cQEDreview}. This results in a trade-off between suppression of nonlinear effects and entangling gate times.

Some recently developed control schemes---using spin-dependent forces \cite{ionCDgate}, echoed conditional displacements \cite{ECDgate}, or conditional-\textsc{not} displacements \cite{CNODgate}---offer alternatives to this trade-off. In all three schemes, the system nonlinearities are small, and an appropriate drive is chosen to induce the time evolution of the system to be a conditional displacement (CD) gate of the form \cite{ionCDgate,ECDgate,CNODgate}
\begin{equation}
    U_{\text{CD}}(\vec{\xi}) := D(\vec{\xi}/2)\ketbra{e}{g} + D(-\vec{\xi}/2)\ketbra{g}{e},
\end{equation}
where $D(\vec{\xi}) = {\exp}(\sum_{m=1}^M \xi_m a_m^\dag - \xi_m^* a_m) =: {\exp}(\vec{\xi} \wedge \vec{a})$ is the usual displacement operator on the cavity modes $\vec{a}$, with the shorthand $\vec{x} \wedge \vec{y} := \sum_k [\vec{x}]_k [\vec{y}]_k^\dag - [\vec{y}]_k [\vec{x}]_k^\dag$, and $\ket{g}$ ($\ket{e}$) is the ground (excited) state of the qubit.

If the qubit is initialized to $\ket{+} \propto \ket{g} + \ket{e}$ so that the initial state of the combined system is $\rho_I = \rho\otimes\ketbra{+}$, where $\rho$ is the cavity state, then performing a CD gate and measuring the qubit along the $\sigma_x = \ketbra{g}{e} + \ketbra{e}{g}$ or $\sigma_y = - i\ketbra{g}{e} + i\ketbra{e}{g}$ basis gives
\begin{equation}
\begin{aligned}
    \ev{\sigma_x} &= \tr[ \sigma_x U_{\text{CD}}(\vec{\xi}) \rho_I U^\dag_{\text{CD}}(\vec{\xi})] = \Re\!\Bqty{\tr[\rho D(\vec{\xi})]},\\
    \ev{\sigma_y} &=  \tr[ \sigma_y U_{\text{CD}}(\vec{\xi}) \rho_I U^\dag_{\text{CD}}(\vec{\xi})] 
    = \Im\!\Bqty{\tr[\rho D(\vec{\xi})]},
\end{aligned}
\end{equation}
from which ${\tr}[\rho D(\vec{\xi})] = \ev{\sigma_x} + i \ev{\sigma_y}$ can be obtained. This is the characteristic function of $\rho$, which is related to its Wigner function by a Fourier transform \cite{WignerReview}
\begin{equation}
    \tr[\rho D(\vec{\xi})] = \int\dd[2M]{\vec{\alpha}} 
    W_\rho(\vec{\alpha}) e^{\vec{\xi}\wedge\vec{\alpha}}.
\end{equation}
Implementations of CD gates have demonstrated their gate times to be orders of magnitude faster than the $\sim \chi^{-1}$ of displaced parity gates required for Wigner function measurements \cite{ionCDgate,ECDgate,CNODgate,displacedParity}.

Both characteristic and Wigner functions provide complete tomographic information of the cavity state, from which entanglement can be inferred. However, as full tomography is an expensive procedure, it is often desirable to certify entanglement from just a few measurements.

\section{\label{sec:prerequisiteTheorems}Theoretical Results}
The entanglement witness introduced in this Letter arises from a corollary of two theorems that are proven in Sec.~S1 of the Supplemental Material \cite{supplementaryMaterial}.%
\nocite{
        WignerNegativityVolume,
        wignerReduced,
        WignerTranspose,
        EPR-Bell,
        EPR-Unnormalized,
        MatrixAnalysis,
        MikeAndIke,
        GaussianQMreview,
        NegativeEnt1,
        Hudson1,
        Hudson2,
        PSTMSS,
        LogNegativityKnown}
As they might be of theoretical interest independent from their usefulness in constructing an entanglement witness, I shall briefly discuss the results of both theorems here.

The first is related to Bochner's theorem, a result from harmonic analysis about the relationship between a nonnegative function and its Fourier transform \cite{Bochner1,Bochner2}.

\begin{theorem}\label{thm:neg}
\textbf{Certifiable lower bound of Wigner negativity volume.}
Given some phase-space points $\Xi:=\{\vec{\xi}_k\}_{k=1}^N$, define the matrix $\mathbf{C}(\rho,\Xi)$ with elements
\begin{equation}
[\mathbf{C}(\rho,\Xi)]_{j,k} = \frac{1}{N}\tr[\rho D(\vec{\xi}_j-\vec{\xi}_k)].
\end{equation}
Then, $\mathcal{N}_C(\rho,\Xi) := \frac{1}{2} \tr[ |\mathbf{C}(\rho,\Xi)| - \mathbf{C}(\rho,\Xi)]$ is a lower bound for the Wigner negativity volume $\mathcal{N}_V(\rho)$ \cite{WignerNegativityVolume}:
\begin{equation}
\begin{aligned}
    \mathcal{N}_C(\rho,\Xi) \leq \mathcal{N}_V(\rho) := \frac{1}{2}\int\dd[2M]{\vec{\alpha}}\bqty\big{\abs{W_\rho(\vec{\alpha})}-W_\rho(\vec{\alpha})
    }.
\end{aligned}
\end{equation}
\end{theorem}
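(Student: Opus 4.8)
The plan is to rewrite the matrix $\mathbf{C}(\rho,\Xi)$ as an average of the Wigner function against a family of nonnegative weights, using the Fourier relation $\tr[\rho D(\vec\xi)]=\int\dd[2M]{\vec\alpha}\,W_\rho(\vec\alpha)e^{\vec\xi\wedge\vec\alpha}$ stated above, and then to read off $\lambda_-$ as a variational minimum of a quadratic form whose only source of negativity is the negative part of $W_\rho$. This is exactly the quantitative refinement of Bochner's theorem: positive semidefiniteness of $\mathbf{C}(\rho,\Xi)$ can fail only in proportion to the amount of negativity carried by $W_\rho$.

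First I would substitute $\vec\xi=\vec\xi_j-\vec\xi_k$ into the Fourier relation, giving $[\mathbf{C}(\rho,\Xi)]_{j,k}=\frac1N\int\dd[2M]{\vec\alpha}\,W_\rho(\vec\alpha)\,e^{(\vec\xi_j-\vec\xi_k)\wedge\vec\alpha}$. Because $\vec x\wedge\vec y$ is additive in its first slot and $\vec\xi\wedge\vec\alpha$ is purely imaginary, the exponential factorizes as $e^{(\vec\xi_j-\vec\xi_k)\wedge\vec\alpha}=g_j(\vec\alpha)\,\overline{g_k(\vec\alpha)}$ with $g_j(\vec\alpha):=e^{\vec\xi_j\wedge\vec\alpha}$ and $\abs{g_j(\vec\alpha)}=1$. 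Hence, for any $\vec v\in\mathbb{C}^N$,
\[
\vec v^\dagger\,\mathbf{C}(\rho,\Xi)\,\vec v=\int\dd[2M]{\vec\alpha}\,W_\rho(\vec\alpha)\,h_{\vec v}(\vec\alpha),\qquad h_{\vec v}(\vec\alpha):=\frac1N\left|\sum_{j}\overline{[\vec v]_j}\,g_j(\vec\alpha)\right|^2,
\]
and Cauchy--Schwarz together with $\abs{g_j}=1$ gives $0\le h_{\vec v}(\vec\alpha)\le\norm{\vec v}^2$, so that $0\le h_{\vec v}\le1$ whenever $\norm{\vec v}=1$.

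Taking $\vec v$ to be a unit eigenvector of $\mathbf{C}(\rho,\Xi)$ for the eigenvalue $\lambda_-$ then yields $-\lambda_-=\int\dd[2M]{\vec\alpha}\,\bqty{-W_\rho(\vec\alpha)}\,h_{\vec v}(\vec\alpha)$. Discarding the integral over $\{W_\rho\ge0\}$, which is nonpositive since $h_{\vec v}\ge0$, and using $h_{\vec v}\le1$ on $\{W_\rho<0\}$, one obtains $-\lambda_-\le\int_{W_\rho<0}\dd[2M]{\vec\alpha}\,\abs{W_\rho(\vec\alpha)}=\mathcal{N}_V(\rho)$; since $\mathcal{N}_V(\rho)\ge0$ this upgrades to $\mathcal{N}_C(\rho,\Xi)=\max(0,-\lambda_-)\le\mathcal{N}_V(\rho)$, which is the claim (and it is trivial when $\mathcal{N}_V(\rho)=+\infty$, while $\mathbf{C}(\rho,\Xi)$ is always well defined because $\abs{\tr[\rho D(\vec\xi)]}\le1$). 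The step requiring the most care is the factorization $e^{(\vec\xi_j-\vec\xi_k)\wedge\vec\alpha}=g_j\overline{g_k}$, which works precisely because $\tr[\rho D(\vec\xi)]$ is a symplectic Fourier transform, so the Weyl phases relating $D(\vec\xi_j-\vec\xi_k)$ to $D(\vec\xi_j)D(\vec\xi_k)^\dagger$ are absorbed automatically, together with the normalization by $N$, which is exactly the factor that makes the Cauchy--Schwarz bound $h_{\vec v}\le1$ sharp enough to return $\mathcal{N}_V(\rho)$ rather than $N\,\mathcal{N}_V(\rho)$. As a byproduct, the same identity shows $\mathbf{C}(\rho,\Xi)$ is positive semidefinite whenever $W_\rho\ge0$, so detecting any negative eigenvalue already certifies Wigner negativity, in line with the Bochner-theorem intuition.
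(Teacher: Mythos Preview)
Your proof is correct and follows essentially the same route as the paper's: writing $\mathbf{C}(\rho,\Xi)=\int\dd[2M]{\vec\alpha}\,W_\rho(\vec\alpha)\,\vec v(\vec\alpha)\vec v^\dagger(\vec\alpha)$ with $\vec v(\vec\alpha)=(e^{\vec\xi_1\wedge\vec\alpha},\dots,e^{\vec\xi_N\wedge\vec\alpha})^T/\sqrt{N}$, applying Cauchy--Schwarz to bound $0\le\abs{\vec u^\dagger\vec v(\vec\alpha)}^2\le1$ for unit $\vec u$, splitting $W_\rho$ into its positive and negative parts, and concluding $-\lambda_-\le\int W_\rho^-=\mathcal{N}_V(\rho)$. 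Your $h_{\vec v}(\vec\alpha)$ is exactly the paper's $\abs{\vec u^\dagger\vec v(\vec\alpha)}^2$ in different notation, and the remaining steps coincide.
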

Theorem~\ref{thm:neg} thus provides a method to lower bound the Wigner logarithmic negativity $\log[2\mathcal{N}_V(\rho)-1]$, which is a commonly used monotone in the resource theories of non-Gaussianity and Wigner negativity \cite{NonGaussianResourceTheory1,NonGaussianResourceTheory2}.

In a previous study of Bochner's theorem as a negativity witness, it was shown that $\frac{1}{2}\operatorname{maxeig}[|\mathbf{C}(\rho,\Xi)|-\mathbf{C}(\rho,\Xi)]$ lower bounds $\mathcal{N}_{\tr}(\rho) := \min_{\sigma:W_\sigma \geq 0}\| \sigma - \rho \|_1$, which is a geometric measure of Wigner negativity based on the trace distance $\| \sigma - \rho \|_1 = \tr|\sigma-\rho|$  \cite{CharacteristicFunctionWitness}. However, the trace distance negativity has not been proven to be monotonic, and, for the examples studied here, $\frac{1}{2}\operatorname{maxeig}[|\mathbf{C}(\rho,\Xi)|-\mathbf{C}(\rho,\Xi)]$ is a tighter bound for $\mathcal{N}_V$ than $\mathcal{N}_{\tr}$.

Note that Bochner's theorem also guarantees the existence of some $\Xi$ such that $\mathcal{N}_C(\rho,\Xi) > 0$ for every $\rho$ with $\mathcal{N}_V(\rho) > 0$ \cite{Bochner1,Bochner2}. Hence, the possible choices of $\Xi$ define a collection of negativity witnesses, for which every Wigner-negative state can be detected by at least one of them. This collection therefore forms a complete family of Wigner negativity witnesses \cite{WitnessingWignerNegativity}.

The second theorem is a surprising connection between the negativity of the reduced Wigner function and the negativity of the partial transpose.
\begin{theorem}\label{thm:ent}
\textbf{Negativity of the reduced Wigner function implies partial transpose negativity.} Consider the equal bipartition of $M$ modes into $\vec{a}_{A} := \{a_m\}_{m=1}^{M/2}$ and $\vec{a}_{B} := \{a_m\}_{m=M/2+1}^{M}$. Denote the partial transpose of $\rho$ over the modes $\vec{a}_B$ as $\rho^{T_B}$, and the partial trace of $\rho$ over the collective modes $\vec{a}_- := (\vec{a}_A-\vec{a}_B)/\sqrt{2}$ as $\tr_{-}\rho$. Then, negativities in the Wigner function of $\tr_{-}\rho$ imply negativities in $\rho^{T_B}$. That is,
\begin{equation}\label{eq:partial-neg-ent}
    W_{\tr_{-}\rho}(\vec{\alpha}) \not\geq 0
    \implies
    \rho^{T_B} \not\succeq 0.
\end{equation}
\end{theorem}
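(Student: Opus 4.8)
The plan is to establish the contrapositive: if $\rho^{T_B}\succeq 0$, then $W_{\tr_{-}\rho}(\vec{\alpha})\geq 0$ for every $\vec{\alpha}$. The engine is the standard fact that (partial) transposition in the Fock basis acts on the Wigner function as a mirror reflection of the corresponding momenta, $\vec{p}_B\mapsto -\vec{p}_B$. The first step is to recast this reflection in the collective-mode frame $\vec{a}_\pm=(\vec{a}_A\pm\vec{a}_B)/\sqrt{2}$. Because $\vec{p}_\pm=(\vec{p}_A\pm\vec{p}_B)/\sqrt{2}$, flipping $\vec{p}_B$ simply exchanges $\vec{p}_+\leftrightarrow\vec{p}_-$ and leaves every position quadrature fixed; writing $\tilde{W}$ for a Wigner function expressed in the $\pm$ quadratures, this reads $\tilde{W}_{\rho^{T_B}}(\vec{x}_+,\vec{p}_+,\vec{x}_-,\vec{p}_-)=\tilde{W}_\rho(\vec{x}_+,\vec{p}_-,\vec{x}_-,\vec{p}_+)$. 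The equal bipartition and the particular $(\vec{a}_A\mp\vec{a}_B)/\sqrt{2}$ combinations are exactly what make this exchange clean.

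Next I would combine this with the elementary fact that the Wigner function of a reduced state is the marginal of the full Wigner function over the traced-out phase space, $W_{\tr_{-}\rho}(\vec{x}_+,\vec{p}_+)=\int\dd{\vec{x}_-}\,\dd{\vec{p}_-}\ \tilde{W}_\rho(\vec{x}_+,\vec{p}_+,\vec{x}_-,\vec{p}_-)$. Substituting the (involutive) reflection identity converts the integrand into $\tilde{W}_{\rho^{T_B}}$, and after relabelling the dummy variables one obtains
\begin{equation*}
W_{\tr_{-}\rho}(\vec{x}_+,\vec{p}_+)=\int\dd{\vec{y}}\,\dd{\vec{s}}\ \tilde{W}_{\rho^{T_B}}(\vec{x}_+,\vec{s},\vec{y},\vec{p}_+),
\end{equation*}
so the integration now runs over the collective-$+$ momenta and the collective-$-$ positions, while the retained arguments are the collective-$+$ positions together with the value $\vec{p}_+$ sitting in the collective-$-$ momentum slot. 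The key structural observation is that the integrated directions $\{p_{+,m}\}\cup\{x_{-,m}\}$ span a Lagrangian subspace of the $2M$-dimensional phase space, whose symplectic complement is spanned by the mutually commuting observables $\{\hat{x}_{+,m}\}\cup\{\hat{p}_{-,m}\}$ (a short commutator check confirms both claims).

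The argument then closes with the standard fact that the marginal of the Wigner function of any positive semidefinite operator over a Lagrangian plane is pointwise nonnegative, since it is precisely the joint probability density of the complete set of commuting observables associated with the complementary plane: for the plane $\{\vec{p}=0\}$ this is just $\langle\vec{x}|\,\cdot\,|\vec{x}\rangle\geq 0$, and the general case follows by a Gaussian (symplectic) change of modes. Applying this to $\rho^{T_B}\succeq 0$ yields $W_{\tr_{-}\rho}(\vec{\alpha})\geq 0$ everywhere, which is the contrapositive of \eqref{eq:partial-neg-ent}. I expect the main difficulty to be organizational rather than conceptual: carefully tracking which of the four collective-quadrature slots $(\vec{x}_+,\vec{p}_+,\vec{x}_-,\vec{p}_-)$ is integrated versus retained after the reflection and relabelling, and confirming that this particular half-space is the Lagrangian/commuting one rather than a generic one; one should also note the mild functional-analytic wrinkle that the joint eigenstates of the commuting set are improper, so the nonnegativity is cleanest when phrased through the probability-density statement rather than via $\tr[\rho^{T_B}E]$ for a projector $E$. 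An equivalent, more hands-on route writes $W_{\tr_{-}\rho}(\vec{\alpha})\propto\tr[\rho^{T_B}\,\hat{O}^{T_B}]$ with $\hat{O}$ a displaced parity on the $+$ modes tensored with the identity on the $-$ modes, and checks directly that $\hat{O}^{T_B}=\ketbra{u}{u}\succeq 0$; this has the bonus of exhibiting an explicit entanglement-witness vector $\ket{u}$.
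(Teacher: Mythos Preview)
Your argument is correct and is, at its core, the same contrapositive computation the paper carries out: both start from the transpose-as-momentum-reflection identity, pass to the collective $\pm$ frame, and recognize that the resulting integral over the ``wrong'' half of phase space is nonnegative whenever $\rho^{T_B}\succeq 0$. The difference is in how that last step is packaged. You invoke the general principle that a Wigner marginal over a Lagrangian plane is the joint probability density of the complementary commuting observables $\{\hat{x}_{+,m}\}\cup\{\hat{p}_{-,m}\}$, hence nonnegative; the paper instead identifies the delta-function kernel $\delta(\vec{\alpha}_A+\vec{\alpha}_B^*-2\vec{\gamma}_+)$ appearing after the change of variables as the Wigner function of an EPR state $\Phi$ and concludes via $\tr(\Phi\,\rho^{T_B})\geq 0$. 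These are the same fact, since the EPR state is precisely the (improper) joint eigenstate of $\hat{x}_+$ and $\hat{p}_-$. Your ``hands-on'' alternative at the end---writing $W_{\tr_-\rho}\propto\tr[\rho^{T_B}\ketbra{u}{u}]$---is exactly the paper's route, with $\ket{u}$ the displaced EPR vector. Your Lagrangian phrasing has the advantage of sidestepping the normalisation subtleties of the EPR state that the paper flags in a footnote; the paper's phrasing has the advantage of naming the witness explicitly.
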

Although an equal bipartition was assumed, Theorem~\ref{thm:ent} can be easily extended to an unequal bipartition $M_A > M_B$ by first performing a partial trace over $M_A-M_B$ modes in $\vec{a}_{A}$.

The two negativities were previously shown to be equivalent for a family of highly symmetric pure states \cite{NegativeEnt1}, while the non-negativity of the reduced Wigner function over a collective mode was known for general product states in the context of quantum convolutions \cite{Bochner2,Quantum-CLT}, and later extended to separable states via convexity \cite{TsirelsonEntanglement}. Theorem~\ref{thm:ent} thus subsumes the previous results about product and separable states \cite{Bochner2,Quantum-CLT,TsirelsonEntanglement}, while it is proven in Sec.~S2 of the Supplemental Material \cite{supplementaryMaterial} that Hudson's theorem and a restriction to highly symmetric pure states recover the result in Ref.~\cite{NegativeEnt1}.

The preceding theorems imply the following corollary.
\begin{corollary}\textbf{Certifiable lower bound of non-Gaussian entanglement.} Consider the equal bipartition of the $M$ total modes into $\vec{a}_{A}$ and $\vec{a}_{B}$. Choose some pairs of phase-space points $\Xi = \{(\vec{\xi}_k^A,\vec{\xi}_k^B)\}_{k=1}^N$, where every pair is symplectically related as
\begin{equation}\label{eq:symplectic-definition}
    \forall k : \pmqty{
        \vec{\xi}_k^A\\
        \vec{\xi}_k^{A*}
    } = \Lambda \pmqty{
        \vec{\xi}_k^B\\
        \vec{\xi}_k^{B*}
    }
\end{equation}
by the same $\Lambda$ for all $k$ such that $\Lambda^\dag \spmqty{\mathbbm{1}&0\\0&-\mathbbm{1}}\Lambda = \spmqty{\mathbbm{1}&0\\0&-\mathbbm{1}}$. Then, define the matrix $\mathbf{C}_2(\rho,\Xi)$ with elements 
\begin{equation}
[\mathbf{C}_2(\rho,\Xi)]_{j,k} = \frac{1}{N}\tr[\rho D_A(\vec{\xi}_j^A-\vec{\xi}_k^A)D_B(\vec{\xi}_j^B-\vec{\xi}_k^B)],
\end{equation}
where $D_A(\vec{\xi})$ ($D_B(\vec{\xi})$) is the displacement operator over the $\vec{a}_A$ ($\vec{a}_B$) modes. With this, the quantity $\mathcal{E}_C(\rho,\Xi) :=\frac{1}{2}\operatorname{maxeig}[|\mathbf{C}_2(\rho,\Xi)|-\mathbf{C}_2(\rho,\Xi)]$ lower bounds the following entanglement measures
\begin{equation}
\begin{aligned}
    \mathcal{E}_C(\rho,\Xi) \leq \mathcal{E}_{\operatorname{SEP}}(\rho) &:= \min_{\sigma \in \operatorname{SEP}} \|\sigma - \rho\|_1, \\
    \mathcal{E}_C(\rho,\Xi) \leq \mathcal{E}_{\operatorname{PPT}}(\rho) &:= \min_{\sigma \in \operatorname{PPT}} \|\sigma^{T_B} - \rho^{T_B}\|_1,
\end{aligned}
\end{equation}
where $\operatorname{SEP}$ and $\operatorname{PPT}$ are, respectively, the set of separable states and positive-partial-transpose states over the $\vec{a}_A$--$\vec{a}_B$ bipartition.
\end{corollary}
Here, $\mathcal{E}_{\operatorname{SEP}}(\rho)$ is a familiar geometric measure defined as the distance between $\rho$ and the set of separable states \cite{EntanglementWeak}. It is not itself an entanglement monotone \cite{EntanglementStrong}, although it can bound other measures that are \cite{SepEntanglementBound}. Meanwhile, $\mathcal{E}_{\operatorname{PPT}}(\rho)$ is a similar geometric measure conjectured to be equivalent to the partial transpose negativity $\tr|\rho^{T_B}| - 1$ \cite{LogNegativityConjecture}. This equivalence has been proven for a large class of states \cite{LogNegativityConjecture,LogNegativityKnown}. Unlike $\mathcal{E}_{\operatorname{SEP}}(\rho)$,  $\log \tr|\rho^{T_B}| $ is strongly monotonic \cite{LogNegativityMonotone}.

Clearly, $\mathcal{E}_C(\rho,\Xi) \leq \frac{1}{2}{\tr}[|\mathbf{C}_2(\rho,\Xi)|-\mathbf{C}_2(\rho,\Xi)]$, so $\mathcal{E}_C(\rho,\Xi)$ also lower bounds the Wigner negativity volume of $\rho$ by Theorem~\ref{thm:neg}. Therefore, $\mathcal{E}_C > 0$ simultaneously witnesses the Wigner negativity and entanglement of non-Gaussian entangled states, and provides quantitative lower bounds for some common measures of both.

\section{\label{sec:CDEW}Implementation of the Witness With Conditional Displacement Gates}
From the Corollary, the CD entanglement witness can be realized by carrying out the following steps:

(1) Choose $N$ phase-space pairs $\Xi = \{(\vec{\xi}_{k}^A,\vec{\xi}_{k}^B)\}_{k=1}^N$ where $\vec{\xi}_{k}^A$ and $\vec{\xi}_{k}^B$ are symplectically related as in Eq.~\eqref{eq:symplectic-definition}.

(2) Using CD gates and qubit measurements (see Fig.~\ref{fig:circuit}), measure $\langle{D_A(\vec{\xi}^A_j-\vec{\xi}^A_k)D_B(\vec{\xi}^B_j-\vec{\xi}^B_k)}\rangle \pm \delta_{j,k}$ for all $j < k$. Here, $\delta_{j,k}$ are the experimental error bars.

(3) Construct $\mathbf{C}_2$ with the matrix elements $[\mathbf{C}_2]_{j,k} = \langle{D_A(\vec{\xi}^A_j-\vec{\xi}^A_k)D_B(\vec{\xi}^B_j-\vec{\xi}^B_k)}\rangle/N$ for $j<k$ from the previous step, while the other elements are given by $[\mathbf{C}_2]_{j,j} = 1$ and $[\mathbf{C}_2]_{j,k} = [\mathbf{C}_2]_{k,j}^*$ for $j>k$.

(4) Calculate $\mathcal{E}_C := \frac{1}{2}\operatorname{maxeig}[|\mathbf{C}_2|-\mathbf{C}_2]$ and $\delta:=\max_j\sum_{k\neq j} \delta_{j,k}/N$. Within the reported experimental uncertainty, the system is entangled when $\mathcal{E}_C > \delta$, and $\mathcal{E}_C \pm \delta$ is a lower bound to $\mathcal{N}_V$, $\mathcal{N}_{\tr}$, $\mathcal{E}_{\operatorname{SEP}}$, and $\mathcal{E}_{\operatorname{PPT}}$.

\begin{figure}
    \centering
    \includegraphics{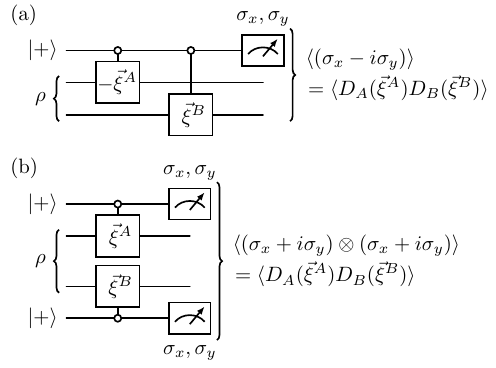}
    \caption{Circuit to measure $\langle{D_A(\vec{\xi}_j^A-\vec{\xi}_k^A)D_B(\vec{\xi}_j^B-\vec{\xi}_k^B)}\rangle$, with (a) one or (b) two auxiliary qubits. The former requires fewer qubits and qubit measurements, while the latter requires only local gates and local measurements on the $A$ and $B$ subsystems. CD gates $U_{\text{CD}}(\vec{\xi})$ are drawn with open circles $\circ$ at the control qubit input and target displacements as boxes containing the argument $\vec{\xi}$. Note the negative signs in (a) due to an accumulated phase from chaining two CD gates.}
    \label{fig:circuit}
\end{figure}

The proof that $\pm \delta$ is the propagated experimental error is given in Sec.~S3 of the Supplemental Material \cite{supplementaryMaterial}. The circuit that measures $[\mathbf{C}_2]_{j,k}$, which requires a single instance of the CD gate for each subsystem, and up to two auxiliary qubits, is shown in Fig.~\ref{fig:circuit}. Notice the freedom to either use collective measurements, which reduces the number of auxiliary qubits needed, or measurements local to each subsystem, which would be required to witness entanglement between remote subsystems.

In total, up to $n_q N(N-1)$ measurement settings are needed, where $n_q$ is the number of auxiliary qubits. This takes into account the qubit measurements ($\sigma_x$, $\sigma_y$ for each qubit), and the CD gate settings ($\vec{\xi}_{j}^{\mu}-\vec{\xi}_k^{\mu}$ for $j < k$). The number of measurement settings can be reduced with symmetries in the choice of $\Xi$, as will be shown in the following examples. 

If $n_s$ samples are taken for each measurement setting, a na\"ive estimate with the standard error gives $\delta_{j,k} \sim 1/\sqrt{n_s}$ for each matrix element of $\mathbf{C}_2$ \cite{statistics}. Since $\mathcal{E}_C$ depends on an eigenvalue of $\mathbf{C}_2$, non-Gaussian entanglement is certified when the propagated error $\delta=\max_j\sum_{k\neq j} \delta_{j,k}/N\sim 1/\sqrt{n_s}$ satisfies $\delta < \mathcal{E}_C$, as derived in Sec.~S3 of the Supplemental Material \cite{supplementaryMaterial}. Therefore, at least $n_s \sim \mathcal{E}_C^{-2}$ samples per setting, or a total of $\sim n_q N^2\mathcal{E}_C^{-2}$ samples, are required. While this scaling applies in the asymptotic limit due to the central limit theorem, better estimates can be obtained by including characterized sources of experimental error, or by performing finite statistical analysis on the witness \cite{ent-finite}.

\subsection{\label{sec:choice-of-points}Choice of Phase-space Points}
It is desirable to choose $\Xi$ such that $\mathcal{E}_C$ is large for the target state $\rho$, as that would ensure that the observed value of $\mathcal{E}_C$ would still be positive within experimental error bars, even with imperfect state preparation and measurements. To achieve this, since $\mathcal{E}_C = \frac{1}{2}\operatorname{maxeig}[|\mathbf{C}_2|-\mathbf{C}_2] = \max(-\lambda_{-},0)$ for the minimum eigenvalue $\lambda_{-}$ of $\mathbf{C}_2(\rho,\Xi)$, an initial heuristic choice $\Xi = \{(\vec{\xi}_k^A,\vec{\xi}_k^B)\}_{k=1}^N$ can be improved via gradient descent on $\lambda_{-}$ \cite{gradientDescent}. This requires the the partial derivative of $\lambda_{-}$ with respect to $\vec{\xi}_k^A$, which is \cite{eigenvalueDerivative}
\begin{equation}
\frac{\partial\lambda_{-}}{\partial\vec{\xi}^{A*}_{k}} = \frac{\vec{v}^\dag ( \partial \mathbf{C}_2 / \partial \vec{\xi}^{A*}_{k} ) \vec{v}}{\vec{v}^\dag\vec{v}},
\end{equation}
where $\vec{v}$ is the minimum eigenvector such that $\mathbf{C}_2\vec{v} = \lambda_{-}\vec{v}$, and $ \partial \mathbf{C}_2 / \partial \vec{\xi}^{A*}_{k} $ can be found by taking the partial derivative of each matrix element of $\mathbf{C}_2$.

There are also scenarios where an optimal choice of $\Xi$ for a state $\rho$ was already found from a prior optimization, and one would now be interested in witnessing the non-Gaussian entanglement of another state $\widetilde{\rho}$ symplectically related to $\rho$. That is, $\widetilde{\rho} = U_+ U_- U_A U_B \rho  U_B^\dag U_A^\dag U_-^\dag U_+^\dag$ via Gaussian unitaries $U_{\mu}$ for $\mu \in \{A,B,+,-\}$, where \cite{GaussianQMreview}
\begin{equation}\label{eq:sympletic-transformed}
    \pmqty{
        U_{\mu}^\dag \vec{a}_\mu U_\mu \\
        U_{\mu}^\dag \vec{a}_\mu^\dag U_\mu
    } = \Lambda_\mu \pmqty{
        \vec{a}_\mu\\
        \vec{a}_\mu^\dag
    } + \pmqty{\vec{\alpha}_{0}^\mu \\ \vec{\alpha}_{0}^{\mu *}}
\end{equation}
for symplectic $\Lambda_\mu$ such that $\Lambda_\mu^\dag \spmqty{\mathbbm{1}&0\\0&-\mathbbm{1}}\Lambda_\mu = \spmqty{\mathbbm{1}&0\\0&-\mathbbm{1}}$, and
\begin{equation}
    \pmqty{
        \vec{a}_\pm\\
        \vec{a}_\pm^\dag
    } := \frac{1}{\sqrt{2}}\pmqty{\vec{a}_A\\\vec{a}_A^\dag} \pm \frac{1}{\sqrt{2}}\,\Lambda\pmqty{\vec{a}_B\\\vec{a}_B^\dag}
\end{equation}
for the $\Lambda$ in Eq.~\eqref{eq:symplectic-definition} that relates $\vec{\xi}_k^A$ with $\vec{\xi}_k^B$. It is proven in Sec.~S4 of the Supplemental Material \cite{supplementaryMaterial} that $\mathcal{E}_C(\widetilde{\rho},\widetilde{\Xi}) = \mathcal{E}_C(\rho,\Xi)$, where $\widetilde{\Xi} = \{(\widetilde{\xi}_{k}^A,\widetilde{\xi}_{k}^B)\}_{k=1}^N$ such that
\begin{equation}\label{eq:sympletic-transform-1}
    \pmqty{
        \widetilde{\xi}_{k}^A \\
        \widetilde{\xi}_{k}^{A*}
    } = \Lambda_+ \Lambda_A \pmqty{
        \vec{\xi}_{k}^A\\
        \vec{\xi}_{k}^{A *}
    },\,\pmqty{
        \widetilde{\xi}_{k}^B \\
        \widetilde{\xi}_{k}^{B*}
    } = \Lambda_+ \Lambda_B \pmqty{
        \vec{\xi}_{k}^B\\
        \vec{\xi}_{k}^{B *}
    }.
\end{equation}
Therefore, the CD witness can also be used to detect $\widetilde{\rho}$ with the same expected value of $\mathcal{E}_C$ by simply choosing the phase-space pairs $\widetilde{\Xi}$ instead of $\Xi$.

\subsection{\label{sec:examples}Application on Example States}
\begin{figure}
    \centering
    \includegraphics{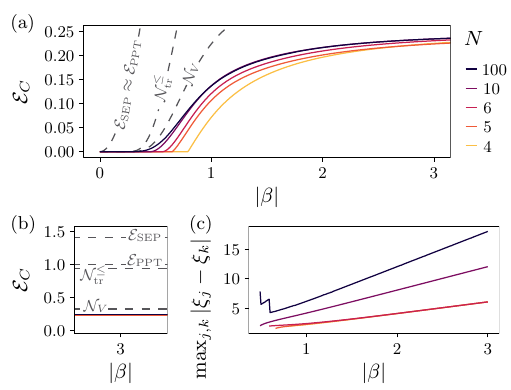}
    \caption{Application of the CD witness on entangled cats. (a) $\mathcal{E}_C$ against $|\beta|$, with all Wigner negativity and entanglement measures also shown. $\mathcal{E}_{\text{SEP}} \approx \mathcal{E}_{\text{PPT}}$ only for $|\beta| \ll 1$. Non-Gaussian entanglement of every cat state with $|\beta| \gtrsim 3/4$ can be detected. (b) Plot expanded in the region around $|\beta|=3$. Notice that $\mathcal{E}_C$ is a tighter bound for $\mathcal{N}_{V}$ than for $\mathcal{N}_{\tr}$, and $\mathcal{E}_{\operatorname{PPT}}$ than for $\mathcal{E}_{\operatorname{SEP}}$. (c) Maximum magnitude of CD displacements, which scales linearly with $|\beta|$. The nonlinear variations at small $|\beta|$ are due to the small amount of Wigner negativities at those values, which makes them sensitive to the numerical precision of the computer when optimizing heuristically.}
    \label{fig:entangledCats}
\end{figure}

As an application of the CD witness, consider entangled cats that take the form $\ket{\operatorname{Cat}_2(\beta)} \propto \ket{\beta,\beta} + \ket{-\beta,-\beta}$, where $\ket{\beta_1,\beta_2}$ are the usual coherent states.

 $\mathcal{E}[\ket{\operatorname{Cat}_2(\beta)},\Xi]$ is plotted against $|\beta|$ in Fig.~\ref{fig:entangledCats}(a), and compared in detail to the other measures in the neighborhood of $|\beta| = 3$ in Fig.~\ref{fig:entangledCats}(b). Note that only a lower bound $\mathcal{N}_{\tr}^{\leq} \leq \mathcal{N}_{\tr}$ could be obtained for the trace distance negativity, see Sec.~S6 of the Supplemental Material for details \cite{supplementaryMaterial}.

Every cat state with $|\beta| \gtrsim 3/4$ can be detected with $N \geq 4$, while a larger $N$ is required to witness cat states of smaller magnitudes. Notice also that $\mathcal{E}_C$ is a tighter lower bound for $\mathcal{N}_V$ than $\mathcal{N}_{\tr}$, and $\mathcal{E}_{\operatorname{PPT}}$ than $\mathcal{E}_{\operatorname{SEP}}$. However, since $\mathcal{E}_C$ lower bounds all measures simultaneously, and $\mathcal{N}_V$ is much smaller than the other measures, $\mathcal{E}_C$ only loosely bounds $\mathcal{N}_{\tr}$, $\mathcal{E}_{\operatorname{PPT}}$, and $\mathcal{E}_{\operatorname{SEP}}$.

For $N=4$, the phase-space points used in Fig.~\ref{fig:entangledCats} are $\Xi = \{(\mu \xi_\nu,\mu \xi_\nu) : \mu,\nu \in \pm\}$, where $\xi_{\pm} := \beta \pm i\pi/(8\beta^*)$. The only displacements needed are $\xi_j-\xi_k \in \{ 2\beta, i\pi/(4\beta^*), 2\xi_{+}, 2\xi_{-} \}$, so the non-Gaussian entanglement of $\ket{\operatorname{Cat}_2(\beta)}$ with $|\beta| \gtrsim 3/4$ can be witnessed with only four points of the characteristic function.

Applications of the CD witness on other example states are given in Sec.~S5 of the Supplemental Material, including entangled Fock states, photon-subtracted two-mode squeezed vacua, and noisy states affected by photon loss \cite{supplementaryMaterial}. It is shown there that four points of the characteristic function are sufficient to detect most of these states, and that the witness is fairly robust against noise with or without any characterization of the noise present.

\section{Conclusion}
In this Letter, I have proposed a non-Gaussian entanglement witness that uses only CD gates and qubit measurements. It does not require full or partial tomography and is particularly suitable for weakly dispersive cQED systems where quadrature measurements and parity gates are unavailable or prohibitively difficult. The joint characteristic function measurements required for the CD witness are already possible in present-day systems---such measurements have just recently been demonstrated by \citet{CNODgate}.

Furthermore, the expected value of the witness is a lower bound to the Wigner negativity volume and a geometric measure of entanglement conjectured to be the partial transpose negativity, thus providing an experimentally accessible method to lower bound quantities related to these measures.

Common non-Gaussian entangled states can be witnessed by measuring as few as four points of the characteristic function. The witness is also easily adaptable for states related by symplectic transformations, which is immediately applicable to recently introduced noise mitigation techniques \cite{SqueezedCatsTheory1,SqueezedCatsTheory2,SqueezedCats}.

A few open questions remain. One is the possibility of improving the lower bound $\mathcal{E}_C$ by using the trace instead of the maximum eigenvalue in Theorem~\ref{thm:ent}. While it has been observed that $\frac{1}{2}\tr[|\mathbf{C}_2|-\mathbf{C}_2] \leq \mathcal{N}_V \leq \mathcal{E}_{\operatorname{PPT}},\mathcal{E}_{\operatorname{SEP}}$ for all example states, I have not managed to prove this in this Letter.

Another open question is the relationship between $\mathcal{E}_C$ and the partial transpose negativity. While the former is a lower bound for the latter for all example states, the conjecture about $\mathcal{E}_{\operatorname{PPT}}$ is still unproven for general states. As the CD witness is not a fidelity witness, usual techniques that relate entanglement witnesses with quantitative estimates of the partial transpose negativity give trivial lower bounds \cite{quantify-EW-1,quantify-EW-2}.

Lastly, a possible future research direction is the extension of these results to the study of magic and mana in discrete Wigner functions \cite{magic}. While Theorem~\ref{thm:neg} should carry over naturally to define a lower bound of the discrete analog of Wigner negativity volume, Theorem~\ref{thm:ent} might pose some difficulty due to its dependence on collective modes. This is because collective modes in continuous variable systems are defined with a balanced beam splitter, but their discrete analog might not always exist, as beam splitters are only defined for specific transmissivities and dimensions in discrete variable systems \cite{discrete-beamsplitter}.

\section*{Acknowledgments}
\begin{acknowledgments}
I thank Jonathan Schwinger for introducing me to the problem of witnessing entanglement with CD gates, and an anonymous referee for pointing out a stronger result about Theorem~\ref{thm:neg}. I am also grateful for helpful discussions with Valerio Scarani, Simone Gasparinetti, and Axel Eriksson.

This Letter is supported by the National Research Foundation, Singapore, and A*STAR under its CQT Bridging Grant.  
\end{acknowledgments}

\bibliography{refs}

\clearpage

\phantomsection\addcontentsline{toc}{part}{Supplementary Material for: Certifiable Lower Bounds of Wigner Negativity Volume and Non-Gaussian Entanglement With Conditional Displacement Gates}

\title{Supplementary Material for: Certifiable Lower Bounds of Wigner Negativity Volume and Non-Gaussian Entanglement With Conditional Displacement Gates}

\maketitle

\setcounter{section}{0}
\setcounter{figure}{0}
\setcounter{equation}{0}
\renewcommand{\thesection}{S\arabic{section}}
\renewcommand{\thefigure}{S\arabic{figure}}
\renewcommand{\theequation}{S\arabic{equation}}

\section{\label{apd:proofs}Proofs of Theorems}
\begin{theorem2}\label{thm:neg2}
\textbf{Certifiable Lower Bound of Wigner Negativity Volume.}
Given some phase-space points $\Xi:=\{\vec{\xi}_k\}_{k=1}^N$, define the matrix $\mathbf{C}(\rho,\Xi)$ with elements
\begin{equation}
[\mathbf{C}(\rho,\Xi)]_{j,k} = \frac{1}{N}\tr[\rho D(\vec{\xi}_j-\vec{\xi}_k)].
\end{equation}
Then, $\mathcal{N}_C(\rho,\Xi) := \frac{1}{2} \tr[ |\mathbf{C}(\rho,\Xi)| - \mathbf{C}(\rho,\Xi)]$ is a lower bound for the Wigner negativity volume $\mathcal{N}_V(\rho)$ \cite{WignerNegativityVolume}:
\begin{equation}
\begin{aligned}
    \mathcal{N}_C(\rho,\Xi) \leq \mathcal{N}_V(\rho) := \frac{1}{2}\int\dd[2M]{\vec{\alpha}}\bqty\big{\abs{W_\rho(\vec{\alpha})}-W_\rho(\vec{\alpha})
    }.
\end{aligned}
\end{equation}
\end{theorem2}
\begin{proof}
Using the fact that ${\tr}[\rho D(\vec{\xi})]$ is the Fourier transform of $W_\rho(\vec{\alpha})$, the matrix elements of $\mathbf{C}(\rho,\Xi)$ are
\begin{equation}
    [\mathbf{C}(\rho,\Xi)]_{j,k} = \frac{1}{N}\int\dd[2M]{\vec{\alpha}} 
    W_\rho(\vec{\alpha}) e^{\vec{\xi}_j\wedge\vec{\alpha}} e^{-\vec{\xi}_k\wedge\vec{\alpha}}.
\end{equation}
Recall that $\vec{\xi}\wedge\vec{\alpha} =\sum_{m=1}^M \alpha_m \xi_m^* - \alpha_m^*\xi_m$, which is antisymmetric with respect to conjugation $(\vec{\xi}\wedge\vec{\alpha})^* = -\vec{\xi}\wedge\vec{\alpha}$. Let $\vec{v}(\vec{\alpha}) := (e^{\vec{\xi}_1\wedge\vec{\alpha}},e^{\vec{\xi}_2\wedge\vec{\alpha}} ,\dots,e^{\vec{\xi}_N\wedge\vec{\alpha}})^T/\sqrt{N}$. Then,
\begin{equation*}
    \mathbf{C}(\rho,\Xi) = \int\dd[2M]{\vec{\alpha}} W_\rho(\vec{\alpha}) \vec{v}(\alpha) \vec{v}^\dag(\alpha)
\end{equation*}
with normalization $\lvert \vec{v}(\vec{\alpha}) \rvert^2 = 1$, such that $\vec{v}(\alpha) \vec{v}^\dag(\alpha) \succeq 0$ is a rank-one projector.

Meanwhile, the Wigner function can be written as the difference $W_{\rho}(\vec{\alpha}) = W_{\rho}^{+}(\vec{\alpha}) - W_{\rho}^{-}(\vec{\alpha})$ of two nonnegative functions $W_{\rho}^{\pm}(\vec{\alpha}) := \frac{1}{2}[|W_{\rho}(\vec{\alpha})| \pm W_{\rho}(\vec{\alpha})] \geq 0$. As such,
\begin{equation}\label{eq:order-of-C}
\begin{aligned}
    -\mathbf{C}(\rho,\Xi)
    &= \int\dd[2M]{\vec{\alpha}}
            W_\rho^-(\vec{\alpha}) \vec{v}(\alpha) \vec{v}^\dag(\alpha)  \\[-1.5ex]
            &\qquad\qquad{}-{}\int\dd[2M]{\vec{\alpha}}
            W_\rho^+(\vec{\alpha}) \vec{v}(\alpha) \vec{v}^\dag(\alpha) \\
    &\preceq \int\dd[2M]{\vec{\alpha}}
            W_\rho^-(\vec{\alpha}) \vec{v}(\alpha) \vec{v}^\dag(\alpha),
\end{aligned}
\end{equation}
where the second line comes from the fact that the first line is the difference of two positive semidefinite matrices. Next, note that
\begin{equation}
\begin{aligned}
    \mathcal{N}_C(\rho,\Xi) &= \frac{1}{2}\Bqty{\tr|\mathbf{C}(\rho,\Xi)| - \tr[\mathbf{C}(\rho,\Xi)]} \\
    &= \tr{\Pi_- \bqty{-\mathbf{C}(\rho,\Xi)}},
\end{aligned}
\end{equation}
where $\Pi_-$ is the projector onto the negative eigenspace of $\mathbf{C}(\rho,\Xi)$. Substituting this back into Eq.~\eqref{eq:order-of-C},
\begin{equation}
\begin{aligned}
    \mathcal{N}_C(\rho,\Xi) &\leq \tr[\Pi_- \int\dd[2M]{\vec{\alpha}}
            W_\rho^-(\vec{\alpha}) \vec{v}(\alpha) \vec{v}^\dag(\alpha) ] \\
    &=\int\dd[2M]{\vec{\alpha}}
            W_\rho^-(\vec{\alpha})  \underbrace{\vec{v}^\dag(\alpha) \Pi_- \vec{v}(\alpha)}_{\leq 1} \\
    &\leq \int\dd[2M]{\vec{\alpha}}
            W_\rho^-(\vec{\alpha}) = \mathcal{N}_V(\rho),
\end{aligned}
\end{equation}
which completes the proof.
\end{proof}

\begin{theorem2}\label{thm:ent2}
\textbf{Negativity of the Reduced Wigner Function Implies Partial Transpose Negativity.} Consider the equal bipartition of $M$ modes into $\vec{a}_{A} := \{a_m\}_{m=1}^{M/2}$ and $\vec{a}_{B} := \{a_m\}_{m=M/2+1}^{M}$. Denote the partial transpose of $\rho$ over the modes $\vec{a}_B$ as $\rho^{T_B}$, and the partial trace of $\rho$ over the collective modes $\vec{a}_- := (\vec{a}_A-\vec{a}_B)/\sqrt{2}$ as $\tr_{-}\rho$. Then, negativities in the Wigner function of $\tr_{-}\rho$ imply negativities in $\rho^{T_B}$. That is,
\begin{equation}\label{eq:partial-neg-ent2}
    W_{\tr_{-}\rho}(\vec{\alpha}) \not\geq 0
    \implies
    \rho^{T_B} \not\succeq 0.
\end{equation}
\end{theorem2}
\begin{proof}
First, define the phase-space coordinates $\vec{\alpha}_\pm := (\vec{\alpha}_A \pm \vec{\alpha}_B)/\sqrt{2}$ of the collective modes $\vec{a}_{\pm} := (\vec{a}_A \pm \vec{a}_B)/\sqrt{2}$. The partial trace over $\vec{a}_-$ is given by the integral of the Wigner function over the $\vec{\alpha}_-$ coordinate \cite{wignerReduced}
\begin{equation}
\begin{aligned}
    &W_{\tr_{-}\rho}(\vec{\alpha}_+) \\
    &\quad{}={} \int\dd[M]{\vec{\alpha}_-} W_\rho\bqty{\frac{1}{\sqrt{2}}(\vec{\alpha}_++\vec{\alpha}_-),\frac{1}{\sqrt{2}}(\vec{\alpha}_+-\vec{\alpha}_-)} \\
    &\quad{}={} 2^M \int\dd[M]{\vec{\gamma}_-} W_\rho\pqty{
        \vec{\gamma}_+ +\vec{\gamma}_-,
        \vec{\gamma}_+-\vec{\gamma}_-
    },
\end{aligned}
\end{equation}
where the change of variables $\vec{\gamma}_{\pm} := \vec{\alpha}_\pm/\sqrt{2}$ was carried out in the last step.

In terms of the Wigner function $W_O(\vec{\alpha}_A,\vec{\alpha}_B)$ of an operator $O$, the Wigner function of $O^{T_B}$ is given by $W_{O^{T_B}}(\vec{\alpha}_A,\vec{\alpha}_B) = W_{O}(\vec{\alpha}_A,\vec{\alpha}_B^*)$ \cite{WignerTranspose}. As such,
\begin{equation}\label{eq:EPR-Wigner-Overlap}
\begin{aligned}
    &\int\dd[M]{\vec{\gamma}_-} W_\rho(\vec{\gamma}_+ - \vec{\gamma}_-,\vec{\gamma}_+ + \vec{\gamma}_-)  \\
    &= \int\dd[M]{\vec{\gamma}_-} W_{\rho^{T_B}}(\vec{\gamma}_+ - \vec{\gamma}_-,\vec{\gamma}_+^{*} + \vec{\gamma}_-^{*}) \\
    &= \int\dd[M]{\vec{\alpha}_A}\int\dd[M]{\vec{\alpha}_B}
    \delta(\vec{\alpha}_A+\vec{\alpha}_B^*-2\vec{\gamma}_+) W_{\rho^{T_B}}(\vec{\alpha}_A,\vec{\alpha}_B) \\
    &\propto \int\dd[M]{\vec{\alpha}_A}\int\dd[M]{\vec{\alpha}_B}
    W_{\Phi}(\vec{\alpha}_A,\vec{\alpha}_B) W_{\rho^{T_B}}(\vec{\alpha}_A,\vec{\alpha}_B).
\end{aligned}
\end{equation}
Here, I have identified $\delta(\vec{\alpha}_A+\vec{\alpha}_B^*-2\vec{\gamma}_+)$ as the Wigner function of the Einstein--Poldosky--Rosen (EPR) state $\Phi$ with center-of-mass position $\sqrt{2}\Re(\vec{\gamma}_+)$ and relative momentum $2\sqrt{2}\Im(\vec{\gamma}_+)$ \cite{EPR-Bell}. Although there are subtleties involved with the normalization of the EPR state \footnote{While interpretations of Eq.~\eqref{eq:EPR-Wigner-Overlap} as a probability can be problematic due to the unnormalizability of the EPR state \cite{EPR-Unnormalized}, there are no issues here as the only property needed is its positive semidefiniteness.}\nocite{EPR-Unnormalized}, $\Phi \succeq 0$ is positive semidefinite for any choice of normalization. Therefore,
\begin{equation}
\begin{gathered}
    \rho^{T_B} \succeq 0 
    \implies \tr(\Phi\rho^{T_B}) \geq 0  \\
    \implies \int\dd[M]{\vec{\gamma}_-} W_\rho(\vec{\gamma}_+ - \vec{\gamma}_-,\vec{\gamma}_+ + \vec{\gamma}_-) \geq 0 \\
    \implies W_{\tr_{-}\rho}(\vec{\alpha}_+) \geq 0.
\end{gathered}
\end{equation}
Taking the converse gives Eq.~\eqref{eq:partial-neg-ent}, as desired.
\end{proof}

\begin{corollary2}\label{col:neg-ent2}\textbf{Certifiable Lower Bound of Non-Gaussian Entanglement.} Consider the equal bipartition of the $M$ total modes into $\vec{a}_{A}$ and $\vec{a}_{B}$. Choose some pairs of phase-space points $\Xi = \{(\vec{\xi}_k^A,\vec{\xi}_k^B)\}_{k=1}^N$, where every pair is symplectically related as
\begin{equation}\label{eq:symplectic-definition2}
    \forall k : \pmqty{
        \vec{\xi}_k^A\\
        \vec{\xi}_k^{A*}
    } = \Lambda \pmqty{
        \vec{\xi}_k^B\\
        \vec{\xi}_k^{B*}
    }
\end{equation}
by the same $\Lambda$ for all $k$ such that $\Lambda^\dag \spmqty{\mathbbm{1}&0\\0&-\mathbbm{1}}\Lambda = \spmqty{\mathbbm{1}&0\\0&-\mathbbm{1}}$. Then, define the matrix $\mathbf{C}_2(\rho,\Xi)$ with elements 
\begin{equation}
[\mathbf{C}_2(\rho,\Xi)]_{j,k} = \frac{1}{N}\tr[\rho D_A(\vec{\xi}_j^A-\vec{\xi}_k^A)D_B(\vec{\xi}_j^B-\vec{\xi}_k^B)],
\end{equation}
where $D_A(\vec{\xi})$ ($D_B(\vec{\xi})$) is the displacement operator over the $\vec{a}_A$ ($\vec{a}_B$) modes. With this, the quantity $\mathcal{E}_C(\rho,\Xi) :=\frac{1}{2}\operatorname{maxeig}[|\mathbf{C}_2(\rho,\Xi)|-\mathbf{C}_2(\rho,\Xi)]$ lower bounds the following entanglement measures
\begin{equation}
\begin{aligned}
    \mathcal{E}_C(\rho,\Xi) \leq \mathcal{E}_{\operatorname{SEP}}(\rho) &:= \min_{\sigma \in \operatorname{SEP}} \|\sigma - \rho\|_1, \\
    \mathcal{E}_C(\rho,\Xi) \leq \mathcal{E}_{\operatorname{PPT}}(\rho) &:= \min_{\sigma \in \operatorname{PPT}} \|\sigma^{T_B} - \rho^{T_B}\|_1,
\end{aligned}
\end{equation}
where $\operatorname{SEP}$ and $\operatorname{PPT}$ are respectively the set of separable states and positive-partial-transpose states over the $\vec{a}_A$--$\vec{a}_B$ bipartition.
\end{corollary2}
\begin{proof}
The corollary is trivially true when $\mathcal{E}_C(\rho,\Xi) = 0$, so only the $\mathcal{E}_C(\rho,\Xi) > 0$ case needs to be considered. Assume first that $\Lambda = \mathbbm{1}$ in Eq.~\eqref{eq:symplectic-definition2}. That is, $\vec{\xi}_k^A = \vec{\xi}_k^B =: \vec{\xi}_k$ for all $k$. Since $\mathcal{E}_C(\rho,\Xi) > 0$, this means that there is a normalized vector $\vec{v} = (v_k)_{k=1}^N$ such that $\vec{v}^\dagger \mathbf{C}_2(\rho,\Xi) \vec{v} = -\frac{1}{2}\operatorname{maxeig}[|\mathbf{C}_2(\rho,\Xi)|-\mathbf{C}_2(\rho,\Xi)] = - \mathcal{E}_C(\rho,\Xi)$. Now, consider the operator
\begin{equation}
    C_2 := \frac{1}{N}\sum_{j,k=1}^N v_j^* v_k D_A(\vec{\xi}_j-\vec{\xi}_k)D_B(\vec{\xi}_j-\vec{\xi}_k),
\end{equation}
defined such that $\forall\sigma:\tr(\sigma C_2) = \vec{v}^\dagger \mathbf{C}_2(\sigma,\Xi) \vec{v}$. Notice that both $C_2$ and its partial transpose  \cite{WignerTranspose}
\begin{equation}
C_2^{T_B} = \frac{1}{N}\sum_{j,k=1}^N v_j^* v_k D_A(\vec{\xi}_j-\vec{\xi}_k)D_B(\vec{\xi}_k^*-\vec{\xi}_j^*)
\end{equation}
take the form $\sum_{j,k} v_j^*v_k D(\vec{\xi}_{j,k}')/N$ for some argument of $\vec{\xi}_{j,k}'$ with $\vec{\xi}_{j,j}' = 0$. As such, writing $C_2^{\square}$ in place of either $C_2$ or $C_2^{T_B}$,
\begin{equation}
    \lambda_{-}^{\square} \leq \tr(\sigma C_2^{\square}) \leq \lambda_{+}^{\square}
\end{equation}
where $\lambda_{-}^{\square}$ and $\lambda_{+}^{\square}$ are the smallest and largest eigenvalues of the matrix with elements ${\tr}[\sigma D(\vec{\xi}_{j,k}')]/N$. By the Gershgorin circle theorem  \cite{MatrixAnalysis},
\begin{equation}
\abs{\lambda_{\pm}-\frac{1}{N}} \leq \max_j \frac{1}{N}\sum_{k\neq j} \abs{\tr[\sigma D(\vec{\xi}_{j,k}')]} \leq 1 - \frac{1}{N},
\end{equation}
hence $|\lambda_{\pm}| \leq 1$. It can also be directly verified that $C_2^\square$ is Hermitian, so $\lambda_\pm$ must be real. Therefore,
\begin{equation}
    \forall \sigma : -1 \leq \tr(\sigma C_2^{\square}) \leq 1 \implies -\mathbbm{1} \preceq C_2,C_2^{T_B} \preceq \mathbbm{1}.
\end{equation}
Notice also that the elements of $\mathbf{C}_2(\sigma,\Xi)$ are
\begin{equation}
\begin{aligned}
{[\mathbf{C}_2(\sigma,\Xi)]}_{j,k} &= \tr[\sigma \; D_A(\vec{\xi}_j-\vec{\xi}_k)D_B(\vec{\xi}_j-\vec{\xi}_k)] \\
&= \tr{\sigma \; \exp[
    \pqty{\sqrt{2}\vec{\xi}_j - \sqrt{2}\vec{\xi}_k} \wedge \frac{\vec{a}_1 + \vec{a}_2}{\sqrt{2}}
] } \\
&= \tr[\sigma \; D_+(\sqrt{2}\vec{\xi}_j-\sqrt{2}\vec{\xi}_k)] \\
&= \tr[\tr_{-}\sigma \; D_+(\sqrt{2}\vec{\xi}_j-\sqrt{2}\vec{\xi}_k)],
\end{aligned}
\end{equation}
where $D_+(\alpha)$ is the displacement operator defined on the $\vec{a}_+$ modes, $\tr_{-}\sigma$ is the partial trace of $\sigma$ over the $\vec{a}_-$ modes, and $\vec{a}_\pm := (\vec{a}_1 \pm \vec{a}_2)/\sqrt{2}$. 

By Theorem~\ref{thm:ent2}, for any $\sigma : \sigma^{T_B} \succeq 0$, the Wigner function of $\tr_{-}\sigma$ will be nonnegative, for which Theorem~\ref{thm:neg2} further implies that $\tr(\sigma C_2) = \vec{v}^\dagger \mathbf{C}_2(\sigma,\Xi) \vec{v} \geq 0$. Then, with $\|\sigma-\rho\|_1 = \max_{-\mathbbm{1} \preceq V \preceq \mathbbm{1}} \tr[V(\sigma-\rho)]$ \cite{MikeAndIke},
\begin{equation}\label{eq:lower-bound-entanglement}
\begin{aligned}
    \mathcal{E}_{\operatorname{PPT}}(\rho) &= \min_{\sigma \in \operatorname{PPT}}\|\sigma^{T_B} - \rho^{T_B}\|_1 \\
    &= \min_{\sigma \in \operatorname{PPT}}
    \max_{-\mathbbm{1}\preceq V \preceq \mathbbm{1}} \tr[V(\sigma^{T_B}-\rho^{T_B})] \\
    &\geq \min_{\sigma \in \operatorname{PPT}} \tr(\sigma^{T_B} C_2^{T_B}) - \tr(\rho^{T_B} C_2^{T_B}) \\
    &\geq 0 - \tr(\rho C_2) = \mathcal{E}_C(\rho,\Xi).
\end{aligned}
\end{equation}
Analogous steps give $\mathcal{E}_{\operatorname{SEP}}(\rho) \geq \mathcal{E}_C(\rho,\Xi)$, thus completing the proof for the case when $\Lambda = \mathbbm{1}$ in Eq.~\eqref{eq:symplectic-definition2}.

To extend this to $\Lambda \neq \mathbbm{1}$, use the fact that there exists a Gaussian unitary $U$ such that \cite{GaussianQMreview}
\begin{equation}
    U^\dag \vec{a}_A U = \vec{a}_A, \quad \pmqty{
        U^\dag \vec{a}_B U \\
        U^\dag \vec{a}_B^\dag U
    } = \Lambda \pmqty{
        \vec{a}_B \\
        \vec{a}_B^\dag
    },
\end{equation}
where $U$ is separable over the $\vec{a}_A$--$\vec{a}_B$ bipartition. Let $\Xi_A := \{(\vec{\xi}_k^A,\vec{\xi}_k^A)\}_{k=1}^N$. From the preceding proof,
\begin{equation}\label{eq:corollary-lower-bounds}
\begin{aligned}
    \mathcal{E}_C(U\rho U^\dag,\Xi_A) &\leq \mathcal{E}_{\operatorname{SEP}}(U\rho U^\dag) =  \mathcal{E}_{\operatorname{SEP}}(\rho), \\
    \mathcal{E}_C(U\rho U^\dag,\Xi_A) &\leq \mathcal{E}_{\operatorname{PPT}}(U\rho U^\dag) =  \mathcal{E}_{\operatorname{PPT}}(\rho),
\end{aligned}
\end{equation}
where the entanglement measures are unchanged under the separable unitary transformation. Finally, since
\begin{equation}
\begin{aligned}
    &{[\mathbf{C}_2(U\rho U^\dag,\Xi_A)]}_{j,k} \\
     &\quad{}={} \tr[U\rho U^\dag D_A(\vec{\xi}_j^A-\vec{\xi}_k^A)D_B(\vec{\xi}_j^A-\vec{\xi}_k^A)] \\
     &\quad{}={} \tr[\rho D_A(\vec{\xi}_j^A-\vec{\xi}_k^A) \, U^\dag  D_B(\vec{\xi}_j^A-\vec{\xi}_k^A) U]\\
     &\quad{}={} \tr[\rho D_A(\vec{\xi}_j^A-\vec{\xi}_k^A) D_B(\vec{\xi}_j^B-\vec{\xi}_k^B)] \\
     &\quad{}={} {[\mathbf{C}_2(\rho,\Xi)]}_{j,k},
\end{aligned}
\end{equation}
with $\Xi = \{(\vec{\xi}_k^A,\vec{\xi}_k^B)\}_{k=1}^N$ and $\vec{\xi}_k^B$ as defined in Eq.~\eqref{eq:symplectic-definition2}, this implies that $\mathcal{E}_C(U\rho U^\dag,\Xi_A) = \mathcal{E}_C(\rho,\Xi)$, which substituted back into Eq.~\eqref{eq:corollary-lower-bounds} completes the proof.
\end{proof}

\section{\label{apd:s-wave}Equivalence of Wigner Negativity and Entanglement for Pure Hyperspherical \texorpdfstring{$s$}{s} Waves}
\citet{NegativeEnt1} studied hyperspherical $s$ waves of the form $\ket{\Psi} = \int\dd{\vec{x}}\Psi(\abs{\vec{x}})\ket{\vec{x}}$ that depend only on the magnitude of the total position vector. Note that $\vec{x}$ includes the positions of multiple particles, hence their nomenclature of ``\emph{hyperspherical}''. Such states are invariant under the transformation $\vec{x} \to R\vec{x}$ for every orthogonal matrix $R^TR = \mathbbm{1}$, which imply that their Wigner functions similarly satisfy $W_{\ket{\Psi}}(R\vec{\alpha}) = W_{\ket{\Psi}}(\vec{\alpha})$.

Note that the proof for the first implication appeared in the original work by \citet{NegativeEnt1}, repeated here for convenience. The proof for the second implication is unique to this work.
\begin{theorem}
    \textbf{\citet{NegativeEnt1}}. A pure hyperspherical $s$ wave is entangled if and only if its Wigner function has negativities.
\end{theorem}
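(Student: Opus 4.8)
The plan is to obtain both directions of the equivalence from Hudson's theorem --- a \emph{pure} state has a pointwise nonnegative Wigner function if and only if it is a Gaussian state --- combined with the defining phase-space symmetry $W_{\ket\Psi}(R\vec\alpha)=W_{\ket\Psi}(\vec\alpha)$, $R^TR=\mathbbm{1}$, of a hyperspherical $s$ wave. The structural fact I would isolate is that \emph{for a pure hyperspherical $s$ wave, being Gaussian and being fully product are the same condition}: each is equivalent to $\ket\Psi \propto \bigl(S(\zeta)\ket{0}\bigr)^{\otimes M}$ for a single common single-mode squeezing $\zeta$. Granting this, the theorem follows at once. If $W_{\ket\Psi}\geq 0$, then $\ket\Psi$ is Gaussian by Hudson, hence product, hence unentangled; contrapositively, entanglement forces a Wigner negativity. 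Conversely, if $\ket\Psi$ is unentangled then, being pure, it is product, hence Gaussian by the structural fact, hence $W_{\ket\Psi}\geq 0$; contrapositively, a Wigner negativity forces entanglement. One of these two directions is the one already treated by \citet{NegativeEnt1}; the other is the new content.

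I would prove the two halves of the structural fact separately. For ``Gaussian $\Rightarrow$ product'': the $O(M)$ symmetry forces the first moments to vanish and the covariance matrix $\Sigma$ to commute with $R\oplus R$ for every orthogonal $R$ acting on the quadratures $(\vec x,\vec p)$; Schur's lemma then pins $\Sigma$ to the block-isotropic form $\bigl(\begin{smallmatrix}a\mathbbm{1}&c\mathbbm{1}\\c\mathbbm{1}&b\mathbbm{1}\end{smallmatrix}\bigr)$, which, after reordering the quadratures mode by mode, is $\sigma^{\oplus M}$ with $\sigma=\bigl(\begin{smallmatrix}a&c\\c&b\end{smallmatrix}\bigr)$ --- i.e.\ a product of $M$ identical single-mode Gaussians; purity fixes $\det\sigma$ and hence $\zeta$. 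For ``product $\Rightarrow$ Gaussian'': writing $\ket\Psi=\bigotimes_m\ket{\psi_m}$ in the position representation, permutation symmetry of $\prod_m\psi_m(x_m)$ forces all $\psi_m\equiv\psi$, and invariance under the $45^\circ$ rotation mixing two coordinates yields the functional equation $\psi(u)\psi(v)=\psi\!\bigl(\tfrac{u+v}{\sqrt 2}\bigr)\psi\!\bigl(\tfrac{u-v}{\sqrt 2}\bigr)$, whose square-integrable solutions are exactly the Gaussians $\psi(x)\propto e^{-(a-ib)x^2/2}$ with $a>0$.

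The main technical hurdle should be the functional-equation step: one must rule out zeros of $\psi$ (a single zero propagates through the equation) and bootstrap square-integrability to enough regularity that only the quadratic exponents survive, excluding pathological solutions. The covariance argument is routine linear algebra, provided one checks carefully that purity together with vanishing first moments really does fix $\ket\Psi$ up to the single parameter $\zeta$, rather than merely its single-mode marginals. A point worth settling at the outset is the reading of ``entangled'': for the pure state $\ket\Psi$ it should mean ``not product across the $M$ single-mode tensor factors,'' so that ``unentangled $\Leftrightarrow$ product'' is legitimate and the argument closes. One could alternatively obtain the ``Wigner-negative $\Rightarrow$ entangled'' direction from Theorem~\ref{thm:ent2} by showing that, for an $s$ wave, a negativity of $W_{\ket\Psi}$ survives the partial trace over the collective mode $\vec a_-$; the $O(M)$ symmetry makes this plausible, but it is less direct than the route above.
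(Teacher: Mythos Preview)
Your proposal is correct, and for the first implication (entanglement $\Rightarrow$ Wigner negativity) it coincides with the paper's argument: Hudson's theorem plus $O(M)$ symmetry pins a Wigner-positive pure $s$ wave to a product of identical squeezed vacua. For the second implication, however, you take a genuinely different route. The paper does \emph{not} argue ``product $\Rightarrow$ Gaussian'' via a Maxwell-type functional equation on the single-mode factor $\psi$. Instead it uses Theorem~\ref{thm:ent2} directly: assuming $\ket\Psi=\ket{\Psi_A}\otimes\ket{\Psi_B}$ across an arbitrary bipartition, rotational invariance identifies the partial traces over $\vec a_-$, $\vec a_{A'}$, and $\vec a_B$ with one another, which forces $\rho_{A'}=\ketbra{\Psi_B}$ and hence further factorization $\ket{\Psi_A}=\ket{\Psi_B}\otimes\ket{\Psi_\Delta}$; Theorem~\ref{thm:ent2} then says the reduced Wigner function over $\vec a_+$ (equal to $W_{\ket{\Psi_B}}$) is nonnegative, and an iteration handles $\ket{\Psi_\Delta}$. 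In other words, the very alternative you sketch in your last paragraph and dismiss as ``less direct'' is precisely the paper's proof. What your approach buys is self-containment: it does not rely on Theorem~\ref{thm:ent2} at all, and the functional-equation step is classical (Cauchy's exponential equation after writing $\psi(x)\psi(y)$ as a function of $x^2+y^2$). What the paper's approach buys is a stronger conclusion with no extra work---separability across \emph{any single} bipartition already forces Wigner nonnegativity---and, pedagogically, a nontrivial application of the paper's own Theorem~\ref{thm:ent2}. Your caveats about zeros of $\psi$ and regularity are apt but routine: $\psi(0)=0$ would force $\psi\equiv 0$, and $L^2$ gives enough measurability to solve the Cauchy equation.
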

\begin{proof}
    (\emph{Entanglement Implies Wigner Negativity}). Assume that $W_{\ket{\Psi}}$ is nonnegative. The multimode extension of Hudson's theorem \cite{Hudson1} by \citet{Hudson2} implies that $W_{\ket{\Psi}}(\vec{\alpha})$ must be Gaussian. The only Gaussian invariant under all rotations is the Wigner function for the separable state $\ket{\xi}^{\otimes M}$, where $\ket{\xi} = \exp(-\xi\wedge a^2/2)\ket{0}$ is the squeezed state. Therefore, taking the converse implies that if $\ket{\Psi}$ is entangled, then it must have negativities in its Wigner function.

    (\emph{Wigner Negativity Implies Entanglement}). Take any bipartition of the $M$ modes into $\vec{a}_A$ and $\vec{a}_B$ with $M_A \geq M_B$. Assume $\ket{\Psi} = \ket{\Psi_A} \otimes \ket{\Psi_B}$. If $M_A > M_B$, further split $\vec{a}_{A}$ into modes $\vec{a}_{A'}$ and $\vec{a}_{\Delta}$ such that $M_{A'}=M_B$, then take a partial trace over $\vec{a}_{\Delta}$ so that $\tr_{\Delta}\ketbra{\Psi} = \rho_{A'} \otimes \ketbra{\Psi_B}$. Define the collective modes $\vec{a}_{\pm} := (\vec{a}_{A'} \pm \vec{a}_B)/\sqrt{2}$, which can be transformed to $\vec{a}_{A'}$ or $\vec{a}_B$ by a rotation. Therefore, by rotational invariance,
    \begin{equation}\label{eq:Wigner-s-wave}
    \begin{aligned}
        \tr_- \pqty{\tr_{\Delta}\ketbra{\Psi}} &=
        \tr_{{A'}} \pqty{\tr_{\Delta}\ketbra{\Psi}} \hspace{-0.85em} &&= \tr_{B} \pqty{\tr_{\Delta}\ketbra{\Psi}} \\
        &=  \ketbra{\Psi_B}\hspace{-0.85em} &&=\rho_{A'} .
    \end{aligned}
    \end{equation}
    Firstly, since $\rho_{A'} = \tr_\Delta \ketbra{\Psi_A} = \ketbra{\Psi_B}$, this means that $\ket{\Psi_A} = \ket{\Psi_B} \otimes \ket{\Psi_\Delta}$ for some $\ket{\Psi_\Delta}$ defined on the modes $\vec{a}_{\Delta}$. Secondly, since $\tr_{\Delta}\ketbra{\Psi}$ is separable, Theorem~\ref{thm:ent} implies that the Wigner function of $\tr_-(\tr_{\Delta}\ketbra{\Psi})$, hence that of every state in Eq.~\eqref{eq:Wigner-s-wave}, must be nonnegative. Repeating this argument with a bipartition over $\vec{a}_{A'+B}$ and $\vec{a}_{\Delta}$ imply that the Wigner function of $\ket{\Psi_\Delta}$ is also nonnegative, so the Wigner function of $\ket{\Psi} = \ket{\Psi_B} \otimes \ket{\Psi_\Delta} \otimes \ket{\Psi_B}$ is a product of nonnegative Wigner functions, which must itself be nonnegative. Taking the converse implies that if the Wigner function of $\ket{\Psi}$ has negativities, then it must be entangled.
\end{proof}

\section{\label{apd:errorPropagation}Propagation of Experimental Error Bars}
To consider how error bars in the experimental data will be propagated to $\mathcal{E}_C$, take that the data is reported to be $|\hat{d}_{j,k} - d_{j,k}| \leq \delta_{j,k}$ within a certain confidence level, where $\hat{d}_{j,k}$ ($d_{j,k}$) is the estimator (estimand) of $\langle D_A(\vec{\xi}_j^A-\vec{\xi}_k^A)D_B(\vec{\xi}_j^B-\vec{\xi}_k^B) \rangle$. Hence, the matrix $\mathbf{C}_2$ can be written as $\widehat{\mathbf{C}}_2 = \mathbf{C}_2 + \Delta$, where $[\widehat{\mathbf{C}}_2]_{j,k} = \hat{d}_{j,k}/N$, $[{\mathbf{C}}_2]_{j,k} = d_{j,k}/N$, and $[\Delta]_{j,k}$ can take any value within the radius $|[\Delta]_{j,k}| \leq \delta_{j,k}/N$. Writing the minimum eigenvalues of $(\widehat{\mathbf{C}}_2, \mathbf{C}_2, \Delta)$ as $(\hat{\lambda}_{-},\lambda_{-},\delta_{-})$, and the maximum eigenvalue of $\Delta$ as $\delta_{+}$, Weyl's inequalities state that \cite{MatrixAnalysis}\begin{equation}\label{eq:error-range}
    -\hat{\lambda}_- + \delta_- \leq -\lambda_-
    \leq - \hat{\lambda}_- + \delta_+.
\end{equation}
Using Gershgorin circle theorem on $\Delta$ \cite{MatrixAnalysis},
\begin{equation}
\abs{\max \delta_+}, \abs{\min\delta_-} \leq \max_{j}\sum_{k \neq j}\frac{\delta_{j,k}}{N}  =:  \delta.
\end{equation}
Hence, the widest range of Eq.~\eqref{eq:error-range} consistent with the experimental data and the reported error is $|\hat{\lambda}_{-} - \lambda_-| \leq \delta$. Lastly, notice that $\max(-\lambda_-,0) = \frac{1}{2}\operatorname{maxeig}[|\mathbf{C}_2|-\mathbf{C}_2] = \mathcal{E}_C$, and similarly for $\widehat{\mathcal{E}}_C$. Therefore, within the given confidence level, the system is certain to be entangled when $\widehat{\mathcal{E}}_C > \delta$, and the experimental error bars are propagated to the computed lower bound as $\mathcal{E}_C = \widehat{\mathcal{E}}_C \pm \delta$.

\begin{widetext}
\section{\label{apd:gaussianUnitaries}Choice of Phase-space Points for State Related By Gaussian Unitaries}
Given $\widetilde{\rho} = U_+ U_- U_A U_B \rho  U_B^\dag U_A^\dag U_-^\dag U_+^\dag$ via Gaussian unitaries $U_{\mu}$ for $\mu \in \{A,B,+,-\}$, where
\begin{equation}\label{eq:apd-sympletic-transformed}
    \pmqty{
        U_{\mu}^\dag \vec{a}_\mu U_\mu \\
        U_{\mu}^\dag \vec{a}_\mu^\dag U_\mu
    } = \Lambda_\mu \pmqty{
        \vec{a}_\mu\\
        \vec{a}_\mu^\dag
    } + \pmqty{\vec{\alpha}_{0}^\mu \\ \vec{\alpha}_{0}^{\mu *}}
\end{equation}
for some symplectic $\Lambda_\mu$, and
\begin{equation}
    \pmqty{
        \vec{a}_\pm\\
        \vec{a}_\pm^\dag
    } := \frac{1}{\sqrt{2}}\pmqty{\vec{a}_A\\\vec{a}_A^\dag} \pm \frac{1}{\sqrt{2}}\,\Lambda\pmqty{\vec{a}_B\\\vec{a}_B^\dag}
\end{equation}
for the symplectic $\Lambda$ that relates $\vec{\xi}_k^A$ with $\vec{\xi}_k^B$,
\begin{equation}\label{eq:apd-sympletic-transform-1}
\begin{aligned}
    {}[\mathbf{C}_2(\rho,\Xi)]_{j,k}
    &= \tr[D_A(\vec{\xi}_j^A-\vec{\xi}_k^A)D_B(\vec{\xi}_j^B-\vec{\xi}_k^B)\;\rho] \\
    &= \tr[
        D_A(\vec{\xi}_j^A-\vec{\xi}_k^A)D_B(\vec{\xi}_j^B-\vec{\xi}_k^B)\;
        U_B^\dag U_A^\dag U_-^\dag U_+^\dag\widetilde{\rho} U_+ U_- U_A U_B  
    ] \\
    &= \tr_+\!\Bqty{
        U_+ \bqty{
            U_A D_A(\vec{\xi}_j^A-\vec{\xi}_k^A)U_A^\dag \;
            U_B D_B(\vec{\xi}_j^B-\vec{\xi}_k^B)U_B^\dag
        } U_+^\dag \;
        \tr_-\!\pqty{U_-^\dag \widetilde{\rho} U_-}
    } \\
    &= \tr[
        D_A(\widetilde{\xi}_{j}^A-\widetilde{\xi}_{k}^A)
        D_B(\widetilde{\xi}_{j}^B-\widetilde{\xi}_{k}^B)
        \widetilde{\rho}
    ]  \; e^{\widetilde{\alpha}^A_{0} \wedge (\widetilde{\xi}_{j}^A-\widetilde{\xi}_{k}^A)}
    e^{\widetilde{\alpha}^B_{0} \wedge (\widetilde{\xi}_{j}^B-\widetilde{\xi}_{k}^B)}\\
    &= [\mathbf{C}_2(\widetilde{\rho},\widetilde{\Xi})]_{j,k} \; e^{\widetilde{\alpha}^A_{0} \wedge (\widetilde{\xi}_{j}^A-\widetilde{\xi}_{k}^A)}
    e^{\widetilde{\alpha}^B_{0} \wedge (\widetilde{\xi}_{j}^B-\widetilde{\xi}_{k}^B)},
\end{aligned}
\end{equation}
where $\tr_\pm$ is the partial trace over the $\vec{a}_\pm$ modes. Here, I have also defined $\widetilde{\Xi} := \{(\widetilde{\xi}_{k}^A,\widetilde{\xi}_{k}^B)\}_{k=1}^N$, where $\widetilde{\xi}_{k}^\mu$ and $\widetilde{\alpha}_{0}^\mu$ for $\mu \in \{A,B\}$ are given by
\begin{equation}\label{eq:apd-sympletic-transform-2}
\begin{aligned}
    \pmqty{
        \widetilde{\xi}_{k}^\mu \\
        \widetilde{\xi}_{k}^{\mu*}
    } &= \Lambda_+ \Lambda_\mu \pmqty{
        \vec{\xi}_{k}^\mu\\
        \vec{\xi}_{k}^{\mu *}
    },&
    \pmqty{
        \widetilde{\alpha}_{0}^\mu \\
        \widetilde{\alpha}_{0}^{\mu *}
    } &= \Lambda_+ \pmqty{
        \vec{\alpha}_{0}^{\mu*}\\
        \vec{\alpha}_{0}^{\mu*}
    } + \pmqty{
        \vec{\alpha}_{0}^+ \\
        \vec{\alpha}_{0}^{+ *}
    }.
\end{aligned}
\end{equation}
The phase factors in the last line of Eq.~\eqref{eq:apd-sympletic-transform-1} act as a unitary transformation on $\mathbf{C}_2(\widetilde{\rho},\widetilde{\Xi})$, which leaves the minimum eigenvalue unchanged. Therefore, $\mathcal{E}_C(\rho,\Xi) = \mathcal{E}_C(\widetilde{\rho},\widetilde{\Xi})$.
\end{widetext}

\section{\label{apd:otherStates}Application of CD Witness on Other Example States}
In this section, the CD witness is applied to some other examples of non-Gaussian entangled states. The heuristically-optimized values of $\mathcal{E}_C(\rho,\Xi)$ are plotted for these states, and compared against the other measures of Wigner negativity and entanglement. Details on obtaining the other measures for the example states can be found in Appendix~\ref{apd:otherMeasures}.

\subsection{Entangled Single-photon Fock States}
\begin{figure}
    \centering
    \includegraphics{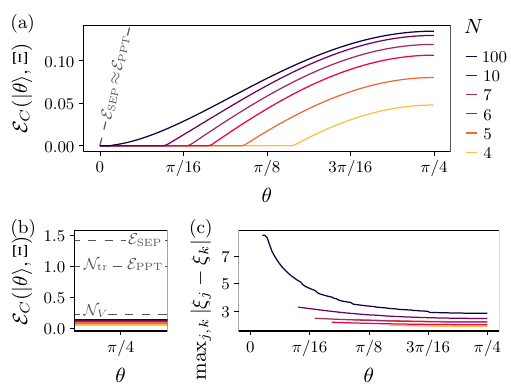}
    \caption{Entangled single-photon Fock states. (a) Expectation value $\mathcal{E}_C$ of the CD witness against $\theta$. $\mathcal{E}_{\text{SEP}} \approx \mathcal{E}_{\text{PPT}} \approx 2\theta$ for $\theta \ll 1$ is plotted as a gray dashed line, while both $\mathcal{N}_{V}$ and $\mathcal{N}_{\tr}$ are outside plot range. Non-Gaussian entanglement is detected for $\theta \gtrsim 0.01\pi$ with $N=100$, and for $\theta \gtrsim 0.146\pi$ with $N=4$. (b) Same plot in the neighborhood of $\theta = \pi/4$, with the vertical axis expanded to show all negativity and entanglement measures. Notice in particular that $\mathcal{E}_C$ is a tighter bound for $\mathcal{N}_{V}$ than for $\mathcal{N}_{\tr}$, and $\mathcal{E}_{\operatorname{PPT}}$ than for $\mathcal{E}_{\operatorname{SEP}}$. (c) Maximum magnitude of the displacements required to implement the CD witness, which scales approximately as $\sim (\theta-\pi/4)^2$.}
    \label{fig:bellType}
\end{figure}
The Fock state $\ket{1,0}$ passed through a beamsplitter gate $B(\theta):=e^{\theta(a_1^\dag a_2 - a_1 a_2^\dag)}$ is given by
\begin{equation}
\ket{\theta} := B(\theta)\ket{1,0} = \cos\theta \ket{1,0} + \sin\theta\ket{0,1},
\end{equation}
which is entangled for any $\theta > 0$ and maximally entangled at $\theta = \pi/4$.

Applying the CD witness on $\ket{\theta}$, the expectation value $\mathcal{E}_C$ is plotted in Fig.~\ref{fig:bellType}(a), and compared in detail to the Wigner negativity and entanglement measures in the neighborhood of the maximally-entangled state $\ket{\theta=\pi/4}$ in Fig.~\ref{fig:bellType}(b). The measures $\mathcal{N}_V$, $\mathcal{N}_{\tr}$, $\mathcal{E}_{\operatorname{PPT}}$, and $\mathcal{E}_{\operatorname{SEP}}$ were all found analytically.

An immediate observation is that more measurements are required to detect the state at smaller angles: $\mathcal{E}_C > 0$ for $\theta \gtrsim 0.146\pi$ with $N=4$, while $\mathcal{E}_C > 0$ for $\theta \gtrsim 0.01\pi$ with $N=100$. Notice also that $\mathcal{E}_C$ is a tighter lower bound for $\mathcal{N}_V$ than $\mathcal{N}_{\tr}$, and for $\mathcal{E}_{\operatorname{PPT}}$ than $\mathcal{E}_{\operatorname{SEP}}$.

Figure~\ref{fig:bellType}(c) shows the maximum magnitude of the displacements needed when performing the CD gates during the implementation of this witness. The dependence on $\theta$ is approximately quadratic, with the least displacement required when the state is maximally entangled.

For $N=4$, the phase-space points used in Fig.~\ref{fig:bellType} are $\Xi = \{(\pm \Re\xi_0,\pm \Re\xi_0),(\pm i\Im\xi_0,\pm i\Im\xi_0)\}$, where
\begin{equation}
\begin{aligned}
\xi_0 {}={} & i\Bqty{\frac{813}{1217} + \frac{\cos\!\bqty{\frac{1249}{171}\pqty{\frac{\pi}{4} - \theta} + \frac{2179}{215}\pqty{\frac{\pi}{4} - \theta}^4}}{1313}}\\
&\quad{}+{}\frac{2531}{2745} + \frac{453}{2083}\pqty{\frac{\pi}{4}-\theta}^2.
\end{aligned}
\end{equation}
For this choice of $\Xi$, the only displacements needed are $\xi_j-\xi_k \in \{2\Re[\xi_0],i 2\Im[\xi_0],2\xi_0,2\xi_0^*\}$, as the other matrix elements of $\mathbf{C}_2$ can be obtained with $\langle{D(-\vec{\xi})}\rangle = \langle{D(\vec{\xi})}\rangle^*$. Thus, the non-Gaussian entanglement of $\ket{\theta}$ with $\theta \gtrsim 0.146\pi$ can be witnessed by measuring only four points of the characteristic function.

\subsection{Photon-subtracted Two-mode Squeezed Vacua}
\begin{figure}
    \centering
    \includegraphics{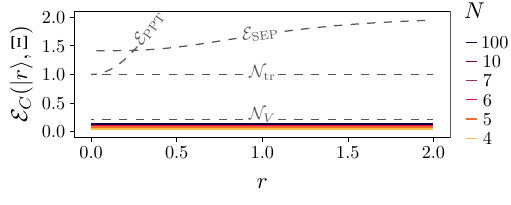}
    \caption{Photon-subtracted two-mode squeezed vacua. The expected value of the CD witness and Wigner negativity measures are the same as for Fig.~\ref{fig:bellType} at $\theta=\pi/4$, as this state is symplectically related to the maximally-entangled Fock state. Meanwhile, both entanglement measures increase with the amount of squeezing. Unlike before, $\mathcal{E}_C$ is not always a tighter bound for $\mathcal{E}_{\operatorname{PPT}}$, since $\mathcal{E}_{\operatorname{PPT}} > \mathcal{E}_{\operatorname{SEP}}$ for $r \gtrsim 0.25$. Generally, $\mathcal{E}_C$ is a rather loose bound for the entanglement measures, but it is also the case that $\mathcal{E}_C > 0$ for all $r$, so the CD witness can certify the non-Gaussian entanglement of all photon-subtracted two-mode squeezed vacua as defined in Eq.~\eqref{eq:PSTMSS}.}
    \label{fig:TMSS}
\end{figure}

Photon-subtracted two-mode squeezed vacua, which can be prepared by postselecting on a photon-detection event on squeezed vacuum states, are of the form \cite{PSTMSS}
\begin{equation}\label{eq:PSTMSS}
\begin{aligned}
    \ket{r} &\propto B(\pi/4)\, a_1 \,S_1(-r) S_2(r)\ket{0,0} \\
    &\propto (a_1+a_2)\ket{\text{TMSV}(r)} \\
    &\propto S_+(-r) S_-(r)\ket{\theta=\pi/4},
\end{aligned}
\end{equation}
where $B(\theta)$ is the beamsplitter gate from before, $S_\mu(r) = \exp[(r/2)(a_\mu^2-a_\mu^{\dagger 2})]$ is the squeeze operator for the mode $a_\mu \in \{a_1,a_2,a_+,a_-\}$ with $a_\pm \propto a_1 \pm a_2$, and $\ket{\text{TMSV}(r)} \propto \sum_{n=0}^\infty \tanh^n(r)\ket{n,n}$ is the two-mode squeezed vacuum.

The last line of Eq.~\eqref{eq:PSTMSS} shows that $\ket{r}$ is symplectically related to the maximally-entangled Fock state from the previous section, so the Wigner negativity measures $\mathcal{N}_V$ and $\mathcal{N}_{\tr}$ for every $\ket{r}$ are the same as those for $\ket{\theta=\pi/4}$. Furthermore, using Eq.~\eqref{eq:sympletic-transform-1}, it is possible to obtain the same expected value $\mathcal{E}_C(\ket{r},\Xi_r) = \mathcal{E}_C(\ket{\theta=\pi/4},\Xi)$ of the CD witness with
\begin{equation}
    \Xi_r := \{ \Re[\xi]e^{-r} + i \Im[\xi]e^r : \xi \in \Xi \}.
\end{equation}
These quantities are plotted together with the entanglement measures in Fig.~\ref{fig:TMSS}. Entanglement increases with squeezing parameter $r$, with $\mathcal{E}_{\operatorname{PPT}} \leq \mathcal{E}_{\operatorname{SEP}}$ for $r \lesssim 0.25$ and $\mathcal{E}_{\operatorname{PPT}} \geq \mathcal{E}_{\operatorname{SEP}}$ for $r \gtrsim 0.25$. Meanwhile, the value of $\mathcal{E}_C > 0$ is the same for every $r$. On one hand, this means that $\mathcal{E}_C$ is a very loose lower bound for the entanglement measures, especially as $r$ increases. On the other hand, this also means that the CD witness can witness the entanglement of a photon-subtracted two-mode squeezed vacuum state with any amount of squeezing.

\begin{figure}
    \centering
    \includegraphics{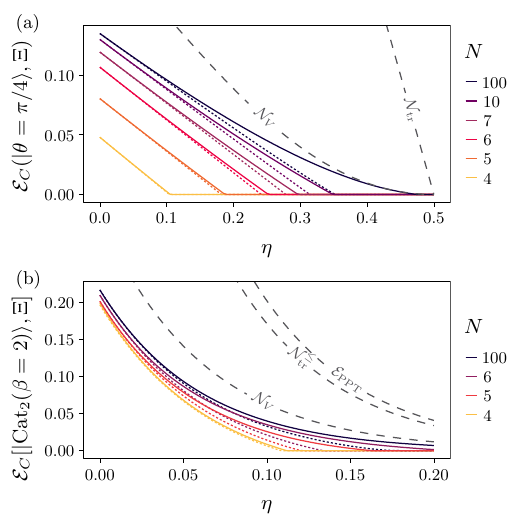}
    \caption{Performance of the CD witness in the presence of photon loss for (a) the maximally-entangled Fock state $\ket{\theta = \pi/4}$ and (b) the entangled cat state $\ket{\operatorname{Cat}_2(\beta=2)}$. $\mathcal{E}_{\text{SEP}}$ is outside the plotted range. Two scenarios are shown: where the phase-space points $\Xi$ of the lossless case is na\"ively used even in the presence of noise (dotted traces), and where $\Xi$ have been further maximized for the noisy state (solid traces). The latter scenario captures the situation where the noise has been fully characterized.}
    \label{fig:noisyStates}
\end{figure}

\subsection{\label{sec:robustness}States In The Presence of Photon Loss}
Consider the noise channel
\begin{equation}
\mathcal{L}_\eta[\bullet] := \tr_{\vec{b}}\Bqty{ U_\eta \bqty{\bullet \otimes \ketbra{0}^{\otimes M} } U^\dagger_\eta },
\end{equation}
where $\vec{b}$ are the modes of the environment and $U_\eta \vec{a} U_\eta^\dagger = \sqrt{1-\eta}\vec{a} + \sqrt{\eta} \vec{b}$. This describes the loss of photons to the environment with some noise parameter $\eta$, where $\eta = 0$ is the lossless case.

The performance of the witness in the presence of photon loss is plotted in Fig.~\ref{fig:noisyStates} for the example states $\ket{\theta=\pi/4}$ and $\ket{\operatorname{Cat}_2(\beta=2)}$. Two scenarios with different choices of $\Xi$ are considered: where $\Xi$ of the lossless state is na\"ively used (dotted traces), and where $\Xi$ has been further maximized for the noisy state (solid traces). The former scenario captures the case where noise is present but uncharacterized, in contrast to the latter scenario where the noise model is known.

As seen in Fig.~\ref{fig:noisyStates}, the CD witness is fairly robust against noise: even without further optimization of $\Xi$, it detects the non-Gaussian entanglement of $\ket{\theta=\pi/4}$ up to a photon loss of $\eta \lesssim 0.35$, and that of $\ket{\operatorname{Cat}_2(\beta=2)}$ up to $\eta \lesssim 0.19$. An even wider range of noisy states can be detected with $\Xi$ optimized for the characterized noise: in fact, $\mathcal{E}_C$ even saturates the Wigner negativity volume for $\ket{\theta=\pi/4}$ with photon loss $0.4 \lesssim \eta < 0.5$.

\section{\label{apd:otherMeasures}Negativity and Entanglement Measures of Example States}
For convenience, the measures to be found are
\begin{equation}
\begin{aligned}
    \mathcal{N}_V(\rho) &= \frac{1}{2}\int\dd[2m]{\vec{\alpha}}\pqty{|W_\rho(\vec{\alpha})| - W_\rho(\vec{\alpha})} \\
    \mathcal{N}_{\tr}(\rho) &= \min_{\sigma:W_\sigma \geq 0} \|\sigma-\rho\|_1 \\
    \mathcal{E}_{\operatorname{SEP}}(\rho) &= \min_{\sigma\in\operatorname{SEP}} \|\sigma-\rho\|_1 = \min_{\ket{a,b}\in\operatorname{SEP}} \|\ketbra{a,b}-\rho\|_1 \\
    \mathcal{E}_{\operatorname{PPT}}(\rho) &= \min_{\sigma\in\operatorname{PPT}} \|\sigma^{T_B}-\rho^{T_B}\|_1,
\end{aligned}
\end{equation}
where the minimization over the convex set $\operatorname{SEP}$ is replaced by the minimization over its extremal points, which are the pure separable states.

For pure states with the Schmidt decomposition $\ket{\psi} = \sum_{k} \sqrt{p_k} |{\phi_k^A}\rangle\otimes|{\phi_k^B}\rangle$, 
\begin{equation}\label{eq:ent-measure}
\begin{aligned}
    \mathcal{E}_{\operatorname{SEP}}(\ket{\psi}) &=  2\sqrt{1 - \max_{\ket{a,b}\in\operatorname{SEP}} \abs{\braket{a,b}{\psi}}^2} = 2\sqrt{1 - \max_k p_k} \\
    \mathcal{E}_{\operatorname{PPT}}(\ket{\psi}) &= \pqty{\sum_k \sqrt{p_k}}^2 - 1.
\end{aligned}
\end{equation}
The last line comes from the closed-form expression of $\mathcal{E}_{\operatorname{PPT}}(\ket{\psi}) = |\ketbra{\psi}^{T_B}|-1$ by \citet{LogNegativityKnown}. Meanwhile, the first line is obtained from rewriting the trace distance in terms of the fidelity as $\|\ketbra{\phi}-\ketbra{\psi}\|_1 = 2(1-\abs{\braket{\phi}{\psi}}^2)^{1/2}$ \cite{MikeAndIke}. Then, with 
\begin{equation}
\begin{aligned}
\abs{\braket{a,b}{\psi}} &= \abs{\sum_{k} \sqrt{p_k} \braket{a}{\phi_k^A}\braket{b}{\phi_k^B}} \\
&\leq \sum_{k} \sqrt{p_k} \abs{\braket{a}{\phi_k^A}}\abs{\braket{b}{\phi_k^B}}.
\end{aligned}
\end{equation}
It can be seen that $|\braket{a,b}{\psi}|$ is maximized by some $\ket{a} = \sum_k \sqrt{q_k} \lvert{\phi_k^A}\rangle$ and $\ket{b} = \sum_k \sqrt{r_k} \lvert{\phi_k^B}\rangle$ such that $\langle{a}\lvert{\phi_k^A}\rangle$ and $\langle{b}\lvert{\phi_k^B}\rangle$ are nonnegative. Finally, from the inequality
\begin{equation}
\begin{aligned}
|\braket{a,b}{\psi}| &= \sum_{k} \sqrt{p_kq_kr_k} \\
&\leq \sqrt{\max_kp_k} \sum_{k'}\sqrt{q_{k'} r_{k'}} \\
&\leq  \sqrt{\max_k p_k},
\end{aligned}
\end{equation}
this gives
\begin{equation}
    \mathcal{E}_{\operatorname{SEP}}(\ket{\psi}) \geq 2\sqrt{1 - \max_k p_k},
\end{equation}
which is saturated by the choice $\ket{a,b} = |\phi_{k}^A\rangle \otimes |{\phi_{k}^B}\rangle$, proving the equality.

Therefore, both entanglement measures can be found for pure states by simply substituting their Schmidt coefficients into Eq.~\eqref{eq:ent-measure}. This covers the entanglement measures for the example states, so the rest of the appendix will only discuss the Wigner negativity measures.

\subsection{Entangled Cat States}
The Wigner negativity volume of $\ket{\operatorname{Cat}_2(\beta)} \propto \ket{\beta,\beta} + \ket{-\beta,-\beta}$ was found using numerical integration. Meanwhile, a lower bound on the trace distance negativity for a single-mode state $\rho_1$ is \cite{WitnessingWignerNegativity}
\begin{equation}
    \min_{\sigma:W_\sigma \geq 0}\| \sigma - \rho_1 \|_1 \geq - \frac{\pi}{2} W_{\rho_1}(\alpha).
\end{equation}
Noting that the entangled cat state can be symplectically transformed with a beamsplitter transformation to $\ket{\operatorname{Cat}_1(\sqrt{2}\beta)} \propto (|{\sqrt{2}\beta}\rangle+|{-\sqrt{2}\beta}\rangle)\otimes\ket{0}$,
\begin{equation}
\begin{aligned}
    \mathcal{N}_{\tr}\bqty{\ket{\operatorname{Cat}_2(\beta)}}
    &\geq \mathcal{N}_{\tr} \bqty{\ket{\operatorname{Cat}_1(\sqrt{2}\beta)}}\\
    &\geq - \frac{\pi}{2} W_{\ket{\operatorname{Cat}_1(\sqrt{2}\beta)}}(\alpha).
\end{aligned}
\end{equation}
The lower bound reported in the main text is $\mathcal{N}^{\leq}_{\tr}(\rho) := \max[0, -(\pi/2) \min_\alpha W_{\ket{\operatorname{Cat}_1(\sqrt{2}\beta)}}(\alpha)]$, where the minimization was done numerically.

\subsection{Entangled Single-photon Fock States and Photon-subtracted Two-mode Squeezed Vacuum}
Since Wigner negativity is invariant under symplectic transformations, the negativity volume can directly integrated to find $ \mathcal{N}_V(\ket{\theta}) = \mathcal{N}_V(\ket{r}) = \mathcal{N}_V(\ket{1,0}) = 2e^{-1/2}-1$ \cite{WignerNegativityVolume}.

Meanwhile, the geometric measure of Wigner negativity is lower-bounded by $\mathcal{N}_{\tr}(\ket{1,0}) \geq \mathcal{N}_{\tr}(\ket{1}) \geq 2(1 - \max_{\sigma:W_\sigma \geq 0}\bra{1}\sigma\ket{1})$ by the contractive property of the trace distance and its relationship to the fidelity \cite{MikeAndIke}. The maximum fidelity is known to be $\max_{\sigma:W_\sigma \geq 0}\bra{1}\sigma\ket{1} = 1/2$ \cite{WitnessingWignerNegativity}. Hence, $\mathcal{N}_{\tr}(\ket{1,0}) \geq 1$. This bound is tight, as it is saturated by the Wigner-nonnegative state
\begin{equation}
    \sigma = \frac{1}{2}\pqty{\ketbra{0} + \ketbra{1}} \otimes \ketbra{0}.
\end{equation}
Therefore, $\mathcal{N}_V(\ket{\theta}) = \mathcal{N}_V(\ket{r}) = \mathcal{N}_{\tr}(\ket{1,0}) = 1$.
\end{document}